\documentclass[11pt,reqno]{article}   
\usepackage{fullpage}

\usepackage[unicode=true]{hyperref}
\usepackage{amsmath}
\usepackage{cite}
\usepackage{amsfonts}
\usepackage{amssymb}
\usepackage{amsthm}
\usepackage{authblk}
\usepackage[pdftex]{color,graphicx}
\setlength{\oddsidemargin}{0in}
\setlength{\textwidth}{6.5in}
\setlength{\headheight}{0in}
\setlength{\headsep}{0in}
\setlength{\topmargin}{0in}
\setlength{\textheight}{9in}
\usepackage{adjustbox}
\usepackage{comment}
\usepackage{bbold}
\usepackage{tikz} 
\usetikzlibrary{decorations.pathreplacing,angles,quotes}
\usetikzlibrary{matrix,arrows,decorations.pathmorphing}
\usepackage{tikz-cd}
\usetikzlibrary{arrows}
\usetikzlibrary{decorations.markings}
\usepackage[all]{xypic}
\usepackage{bbold}
\usepackage{diagbox}
\usepackage{caption}
\usepackage{subcaption}
\usepackage{pdfpages} 
\usepackage{blkarray}
\usepackage{centernot}
\usepackage{mathtools}
\usepackage{stmaryrd}
\usepackage{soul}  
\usepackage{mypack-aziz} 



\definecolor{bluegray}{rgb}{0.4, 0.6, 0.8}
\definecolor{turquoise}{rgb}{0.2, 0.7, 0.6}

\newcommand\co[1]{{\color{black}#1}}  


%

%
%
\newcommand{\Facc}{\operatorname{Face}}
\newcommand{\Facet}{\operatorname{Face}}
\newcommand{\Face}{\operatorname{Face}}

\usepackage{tikz-cd}

\usepackage{todonotes}
\newcounter{commcount}\setcounter{commcount}{0}
\usepackage{xcolor}


\begin{document}




\title{Extremal simplicial distributions on cycle scenarios with arbitrary outcomes}

\author[1]{Aziz Kharoof}
\author[2]{Selman Ipek}
\author[2]{Cihan Okay\footnote{Author to whom any correspondence should be addressed. Email: cihan.okay@bilkent.edu.tr}}
\affil[1]{{\small{Department of Mathematics, University of Haifa, Haifa, Israel}}}

\affil[2]{{\small{Department of Mathematics, Bilkent University, Ankara, Turkey}}}

	\maketitle


\begin{abstract}
{Cycle scenarios are a significant class of contextuality scenarios, with the Clauser-Horne-Shimony-Holt (CHSH) scenario being a notable example. While binary outcome measurements in these scenarios are well understood, the generalization to arbitrary outcomes remains less explored, except in specific cases. In this work, we employ homotopical methods in the framework of simplicial distributions to characterize all contextual vertices of the non-signaling polytope corresponding to cycle scenarios with arbitrary outcomes. Additionally, our techniques utilize the bundle perspective on contextuality and the decomposition of measurement spaces. This enables us to extend beyond scenarios formed by gluing cycle scenarios and describe contextual extremal simplicial distributions in these generalized contexts.} 
\end{abstract}

\tableofcontents  

\section{Introduction}

Simplicial distributions \cite{okay2022simplicial} {are} a \co{recent} framework {for studying contextuality} that extends the sheaf-theoretic framework of Abramsky--Brandenburger \cite{abramsky2011sheaf} and the cohomological framework of Okay et al. \cite{Coho}.
Measurements and outcomes are represented by combinatorial models of spaces known as simplicial sets, and the non-signaling conditions are encoded topologically.
The theory of simplicial distributions is an extension of the theory of non-signaling distributions, which utilizes topological ideas and tools for studying the polytopes of non-signaling distributions and the Bell polytopes describing the classical region {\cite{e25081127}}.
{In this paper, we apply homotopical techniques to fully characterize the extremal distributions on cycle scenarios with arbitrary outcomes, \co{thereby extending} the well-known binary-outcome case.} 
Our techniques generalize to more sophisticated measurement spaces obtained by gluing cycle scenarios.

A simplicial distribution consists of a family of distributions $p_\sigma$ parametrized by the simplices of the measurement space. Here $\sigma$ is an $n$-dimensional measurement (or context), and the distribution $p_\sigma$ is on the set of $n$-dimensional outcomes.
These distributions are related by the topological non-signaling conditions imposed by the simplicial structure of the measurement and outcome spaces.
In this paper, we consider a measurement {space} that is $1$-dimensional, e.g., a directed graph, whose vertices represent measurements and edges consist of pairs of measurements that can be simultaneously performed. 
A convenient way to represent {these} simplicial distribution {is} as a family of $d\times d$ matrices with entries $p_\sigma^{ab}$ for each edge $\sigma$.
The sum of the rows of this matrix gives the marginal distribution at the source vertex and the sum of the columns at the target vertex. 
A cycle scenario has the measurement space $C^{(n)}$ consisting of $n$-edges glued to form a disk's boundary. For example, $n=4$ corresponds to the famous 
Clauser--Horne--Shimony--Holt
(CHSH) scenario {\cite{chsh69}}.
Our main result is a characterization of the contextual vertices of cycle scenarios with arbitrary outcomes in $\ZZ_d=\set{0,1,\cdots,d-1}$.
A $k$-order cycle distribution, where $1\leq k\leq d$, is a particular type of distribution whose non-zero probabilities are of the form $1/k$ (Definition \ref{Def:CccyclicDis}).

\begin{thm*}
A simplicial distribution on the cycle scenario $C^{(n)}$ is a contextual vertex if and only if it is a $k$-order cycle distribution for some $k\geq 2$.
\end{thm*}

{This result provides a complete classification of extremal simplicial distributions on cycle scenarios with arbitrary outcome sets. The contextual extremal points correspond precisely to $k$-order cycle distributions with $k \geq 2$, while the remaining extremal points are deterministic and correspond to the case $k = 1$. The key novelty lies in allowing arbitrary outcome sizes: this extends the well-studied binary-outcome case and yields a full solution to the extremal problem for cycle scenarios with arbitrary outcomes.

Leveraging the explicit form of $k$-order cycle distributions, we also derive a formula for the total number of vertices of the polytope of simplicial distributions associated with the $n$-cycle scenario with $d$ outcomes (Corollary~\ref{cor:number of vertices}). For fixed $(n,d)$, the number of $k$-order cycle distributions is given by 
\[\binom{d}{k}^n (k!)^{n-1} (k-1)!.\]
Applying Stirling’s approximation to the case $k = d$ shows that the number of vertices grows super-exponentially in $d$, with leading-order behavior $\Theta((d/e)^{dn})$.}
{This rapid growth, together with the general intractability of vertex enumeration for polyhedra \cite{khachiyan2008generating}, highlights that determining all extremal points of the simplicial polytope constitutes a computationally hard enumeration problem.}


Our main method of approach follows the homotopical methods introduced in \cite{kharoof2023homotopical}.
There are two ways to extend these results:
\begin{itemize}
\item The bundle approach of \cite{barbosa2023bundle} allows us to construct scenarios where measurements can assume outcomes in different sets.  Corollary \ref{cor:kcyclicvertex} extends our theorem above to the case of arbitrary outcomes not necessarily uniform over each measurement, extending the results of {\cite{barrett2005nonlocal}}.
\item We analyze how gluing two measurement spaces affects the vertices of the associated polytope of simplicial distributions. {Theorem} \ref{thm:vertex on union} allows us to extend our results to scenarios obtained by gluing cycle scenarios.
\end{itemize}

{Non-signaling distributions, and their generalization to simplicial distributions, can be utilized for computation. A typical example is the Mermin star contextuality scenario of \cite{mermin1993hidden}, converted into a computation using the measurement-based quantum computation (MBQC) model \cite{anders2009computational}. The connection between contextuality and computational power is further analyzed in \cite{raussendorf2013contextuality}. Our techniques are applicable to Bell scenarios that can be used for such computational purposes. In particular, the $(2,3,3)$ Bell scenario---two parties, each with three trichotomic measurements---can be collapsed to a measurement space obtained by gluing two cycle scenarios. Using this method, we are able to identify a contextual vertex (Section~\ref{sec:232 Bell}). Our main classification result for cycle scenarios with arbitrary outcomes opens up new applications to MBQC where the observables have an arbitrary number of outcomes. Such extensions have been studied in the literature, for example in \cite{frembs2018contextuality}.

More recently, Bell scenarios of the form $(n,3,2)$---consisting of $n$ parties, each with three dichotomic measurements---have found applications in the classical simulation of quantum computation within the framework of quantum computation with magic states \cite{okay2024classical}.
 In this simulation model, the vertices of the polytope of simplicial distributions serve as classical keys, whose probabilistic mixtures represent the quantum state in the computation. Therefore, as a natural generalization to quantum measurements with $d$ outcomes \cite{gross2006hudson}, the enumeration of these vertices for Bell scenarios of the form $(n,d+1,d)$ seems to be crucial from the classical simulation perspective, providing a rigorous approach to identifying quantum advantage.}

The structure of the paper is as follows. In Section \ref{sec:Homotopical vertices of facets}, we introduce background material for simplicial sets and homotopical tools for analyzing {faces} of distribution polytopes.
Section \ref{sec:bundle scenarios} is about bundle scenarios and their application {to extend our analysis  to larger sets of outcomes, as demonstrated in  Proposition \ref{pro:BbundleMor}.}
In Section \ref{sec:Cycle scenario with arbitrary outcome}, we prove our main theorem highlighted above.
Gluings of cycle scenarios are analyzed in Section \ref{sec:scenarios obtained by gluing} along with interesting examples. 

\paragraph{Acknowledgments.}
This work is supported by the Air Force Office of Scientific Research (AFOSR) under
award number FA9550-21-1-0002.
The second and third authors 
acknowledge support from the Digital Horizon Europe project FoQaCiA, GA no. 101070558. {The first and third authors are also supported by AFOSR  FA9550-24-1-0257.}

\section{{Homotopical vertices of {faces}}}
\label{sec:Homotopical vertices of facets}


{We begin by recalling basic definitions from the theory of simplicial distributions {\cite{okay2022simplicial}}. Then in Section \ref{sec:The vertex support} we introduce a new notion called the vertex support based on a preorder between simplicial distributions. We use this notion to provide a characterization of vertices. In Sections \ref{sec:Null-homotopy} and \ref{sec:Vertices of facets} we recall results 
from   
\cite{kharoof2023homotopical}
on the homotopical characterization of strong contextuality}.

\subsection{Simplicial distributions}
\label{sec:Simplicial distributions}

The theory of simplicial distributions \cite{okay2022simplicial} is a combinatorial framework for describing distributions on a simplicial scenario $(X,Y)$ consisting of
\begin{itemize}
\item a measurement space $X$, and
\item an outcome space $Y$.
\end{itemize}
A space in this framework is represented by a simplicial set $X$ consisting of the data of a set of simplicies $X_0,X_1,\cdots,X_n,\cdots$ and the simplicial structure maps. 
The structure maps, given by the face maps $d_i^X:X_n\to X_{n-1}$ and degeneracy maps $s_j^X:X_n\to X_{n+1}$, encode how to glue and collapse the simplices, respectively {(see, e.g.,\cite{friedman2008elementary})}. {We usually denote the simplicial structure maps by $d_i$ and $s_j$ omitting the simplicial set from the notation.}
A simplex is called degenerate if it lies in the image of a degeneracy map. Otherwise, it is called non-degenerate. A non-degenerate simplex is called a generating simplex if it is not a face of another simplex.
A map $f:X\to Y$ between two simplicial sets is given by a collection of functions $\set{f_n:X_n\to Y_n}_{n\geq 0}$ compatible with the simplicial structure maps. We will employ the notation $f_x=f_n(x)$ for $x\in X_n$. The set of simplicial set maps will be denoted by $\catsSet(X,Y)$.

{For a set $U$, let $D(U)$ denote the set of probability distributions, i.e., functions $p:U\to \RR_{\geq 0}$ with finite support satisfying $\sum_{u\in U} p(u)=1$.}
A simplicial distribution on the scenario $(X,Y)$ is a simplicial set map
$$
p:X\to D(Y).
$$
{Here} $D(Y)$ is the simplicial set whose $n$-simplices are {given by the set $D(Y_n)$ of} distributions on $Y_n$ and the simplicial structure maps are given by marginalization along the structure maps of $Y$:
$$
{d_i^{D(Y)} = D(d_i^Y)\;\; \text{ and }\;\; s_j^{D(Y)} = D(s_j^Y).}
$$
More explicitly, a simplicial distribution consists of a family of distributions
$$
\set{p_x\in D(Y_n): x\in X_n}_{n\geq 0}
$$
compatible under the topological non-signaling conditions given by {$d_i^{D(Y)}p_x=p_{d_ix}$ and $s_j^{D(Y)}p_x =p_{s_jx}$}. We write $p_x^y$ 
for the probability {$p_x(y)$} of observing the outcome $y$ for the measurement $x$. {For {a} simplicial set map $\varphi:X \to Y$ we define the associated deterministic distribution $\delta^{\varphi}:X \to D(Y)$ {by} $\delta_Y \circ \varphi$ where $\delta_Y:Y\to D(Y)$ is the canonical map that sends a simplex to the delta distribution peaked at that simplex.} 
The set ${\catsSet(X,D(Y))}$ of simplicial distributions on $(X,Y)$ is denoted by $\sDist(X,Y)$.

Typically in applications we impose the following restrictions:
\begin{itemize}
\item $X$ is a finitely generated simplicial set, and
\item $Y$ has finitely many simplices in each dimension.
\end{itemize}
The first condition means that $X$ has finitely many generating simplices. {Under these conditions $\sDist(X,Y)$ is a polytope.}

{We are primarily interested in the following  spaces, which will serve as measurement and outcome spaces, respectively:}
\begin{itemize}
\item An {\it $n$-circle} 
is a simplicial set $C^{(n)}$ specified by 
a sequence of pairwise distinct $1$-simplices $\sigma_1,\cdots,\sigma_n$ satisfying
$
d_{0}(\sigma_i)=d_{1}(\sigma_{i+1}) 
$ for every $1 \leq i \leq n-1$ and 
$d_{0}(\sigma_n)=d_{1}(\sigma_1)$. We will   write $v_i=d_1(\sigma_i)$. 

\item For a set $U$ let $\Delta_U$ denote the simplicial set whose $n$-simplices are given by the set $U^{n+1}$ and 
	the simplicial structure maps are given by
	$$
	\begin{aligned}
		d_i(x_0,x_1,\cdots,x_n) &=( x_0,x_1,\cdots,x_{i-1},x_{i+1},\cdots,  x_n) \\
		s_j(x_0,x_1,\cdots,x_n) &=( x_0,x_1,\cdots,x_{j-1},x_j,x_{j},x_{j+1},\cdots,  x_n).
	\end{aligned}
	$$
\end{itemize}

\begin{defn}\label{def:cycle scenario}
{\rm
An \emph{$n$-cycle scenario} consists of  
\begin{itemize}
\item the measurement space given by an
{$n$-circle}
$C^{(n)}$, and
\item the outcomes space
$\Delta_{\ZZ_d}$ where $\ZZ_d=\set{0,1,\cdots,d-1}$. 
\end{itemize}
}
\end{defn}

{In Figure \ref{fig:4cycle} the $4$-circle measurement space is depicted. When the outcomes are in $\ZZ_2$ the resulting scenario is the famous CHSH scenario.}

\begin{figure}[h!] 
  \centering
  \includegraphics[width=.2\linewidth]{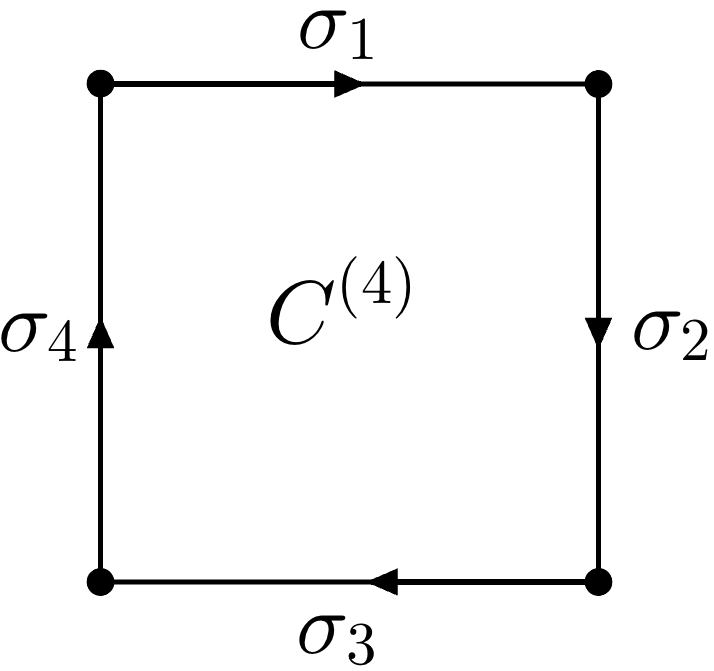}
\caption{{The $4$-circle measurement space.}}
\label{fig:4cycle}
\end{figure}

More generally, {we will consider}
$1$-dimensional 
{measurement spaces,}
i.e., basically directed graphs.
In this case the simplicial set {representing the measurement space} is specified by 
its $0$-simplices and $1$-simplices together with the face maps between them.  {We will write a distribution on an edge $\sigma$ as a {$d\times d$}-matrix $M$ where $M_{ab}=p^{ab}$. For convenience we index the rows and columns of such a matrix {by} $0,\cdots,d-1$. The marginals on the initial vertex and the terminal vertex {are} given by the sum of {the} rows and columns, respectively.}

\subsection{The vertex support}
\label{sec:The vertex support}

{Next, we introduce a preorder on the set of simplicial distributions and 
an associated
notion called the vertex support. This is used to provide a 
new
characterization of vertices.}

\begin{defn}\label{def:strong contextuality}
{\rm
The {\it support of a simplicial distribution}
	$p:X\to D(Y)$ is defined by
$$
	\supp(p) =\{ \varphi \in \St(X ,Y)  :\, p_x^{\varphi_x} \neq 0,\; \forall x\in X_n,\, n \geq 0  \}.
$$
We say $p:X\to D(Y)$ is {\it strongly contextual} if $\supp(p)$ is empty.
}
\end{defn}

Note that to have $\varphi\in \supp(p)$ it suffices that  
$p_{x}^{\varphi_x}\neq 0$ for {every} 
generating simplex $x$ of $X$.

\begin{defn}\label{def:Rrrelation}
{\rm 
Given two simplicial distributions $p,q:X \to D(Y)$ we write {$q\preceq p$} if 
$q_{x}^y\neq 0$ implies $p_x^y\neq 0$ for every $n \geq 0$, $x \in X_n$, and $y\in Y_n$.  
}
\end{defn} 
The relation $\preceq$ is a preorder on 
$\sDist(X,Y)$.
It suffices to verify the  condition for this relation only on the generating simplices of $X$. 
Recall that a simplicial distribution $p$ is called extremal if $p=\alpha q + (1-\alpha) \tilde q$ and {$0<\alpha\leq 1$} implies $p=q$. Extremal simplicial distributions are also called vertices. 
%
%
\begin{defn}\label{def:Vsuppp}
{\rm
Given a simplicial distribution $p:X \to D(Y)$. We define the vertex support $\Vsupp(p)$ to be the set of vertices 
$q\in \sDist(X,Y)$
satisfying $q \preceq p$. 
}
\end{defn}

The notion of vertex support is a natural extension of the notion of support. Note that
$\varphi \in \supp(p)$ if and only if $\delta^{\varphi} \preceq p$. Since every deterministic distribution is a vertex we have 
$\set{\delta^{\varphi}~|~\varphi \in supp(p)} \subseteq \Vsupp(p)$.
\begin{lemma}\label{lem:Extractsupp}
Given simplicial distributions $p,q:X \to D(Y)$, we have $q\preceq p$ if and only if there exists a simplicial distribution $\tilde{p}:X \to D(Y)$ and $\alpha \in (0,1]$ such that $p=\alpha q +(1-\alpha)\tilde{p}$. 
\end{lemma}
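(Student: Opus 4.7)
The plan is to handle the backward direction $(\Leftarrow)$ by a direct estimate, and the forward direction $(\Rightarrow)$ by constructing $\tilde p$ explicitly as the rescaled residual $(p-\alpha q)/(1-\alpha)$ for an optimal choice of $\alpha$. For the backward direction, assuming $p=\alpha q+(1-\alpha)\tilde p$ with $\alpha\in(0,1]$, any simplex $x$ and outcome $y$ with $q_x^y\neq 0$ satisfies $p_x^y\geq \alpha q_x^y>0$, hence $q\preceq p$.

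For the forward direction, assume $q\preceq p$ and set
$$
\alpha \;=\; \min\bigl\{p_x^y/q_x^y \;:\; x\in X_n \text{ a generating simplex},\; y\in Y_n,\; q_x^y>0\bigr\}.
$$
Under the standing finiteness assumption on $X$ and $Y$ this is a minimum over a finite nonempty set, and the hypothesis $q\preceq p$ ensures $\alpha>0$. To see $\alpha\leq 1$: if the minimum at some $x$ were strictly above $1$, then $\sum_{y:\,q_x^y>0} p_x^y > \sum_{y:\,q_x^y>0} q_x^y = 1$, contradicting $\sum_y p_x^y=1$. The boundary case $\alpha=1$ forces $p=q$ by the same sum argument, and one takes $\tilde p:=p$. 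Otherwise, define $\tilde p_x:=(p_x-\alpha q_x)/(1-\alpha)$ at each simplex $x$.

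The remaining task is to verify that $\tilde p$ belongs to $\sDist(X,Y)$. Non-negativity $\tilde p_x^y\geq 0$ is precisely the constraint imposed on $\alpha$; the normalization $\sum_y \tilde p_x^y=1$ follows from $p_x,q_x$ both being probability distributions; and compatibility with the face and degeneracy maps of $D(Y)$ is inherited from the corresponding identities for $p$ and $q$ by linearity of $D(d_i^Y)$ and $D(s_j^Y)$. Substituting back gives $p=\alpha q+(1-\alpha)\tilde p$. There is no real obstacle here; the content of the lemma lies in identifying the correct $\alpha$ and checking that the preorder condition $q\preceq p$ is exactly what makes the pointwise positivity $\alpha>0$ available.
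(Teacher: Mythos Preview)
Your proof is correct and follows essentially the same route as the paper: both define $\alpha$ as the minimum of the ratios $p_x^y/q_x^y$ over generating simplices, construct $\tilde p=(p-\alpha q)/(1-\alpha)$, and verify it is a simplicial distribution. You are slightly more explicit about the non-negativity of $\tilde p$ and about the easy converse direction, but the argument is otherwise identical.
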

\begin{proof}
{Assume $q\preceq p$.}
We define $\alpha$ to be the minimum of the following set of real numbers
$$
\bigcup_{n \geq 0}\set{\frac{p_x^y}{q_x^y} ~|~  x \in X_n ~\text{a generator}, y\in Y_n, ~\text{and}~ q_x^y\neq 0}.
$$
Firstly, this minimum exists because the set is finite {and $\alpha > 0$ because $q\preceq p$. On the other hand, the condition that $\sum_{y\in Y_n}p_x^y=1=\sum_{y\in Y_n}q_x^y$ for every $x\in X_n$, 
implies that for every $x\in X_n$ there is $y\in Y_n$ such that $q_x^y\neq 0$ and $p_x^y \leq q_x^y$, which means that $\alpha\leq 1$}. 
If $\alpha=1$ we conclude that $p=q$,
 {in which case}
$p=1\cdot p+(1-1)p$.
For the rest we assume that $\alpha<1$. 
Then we define the simplicial distribution 
$\tilde{p}:X\to D(Y)$ by
$$
\tilde{p}_x^y=\frac{p_x^y-\alpha q_x^y}{1-\alpha}
$$
for every generator $x \in X_n$ and $y\in Y_n$. 
We have 
$$
\sum_{y\in Y_n}\tilde{p}_x^y=
\sum_{y\in Y_n}\frac{p_x^y-\alpha q_x^y}{1-\alpha}=\frac{1}{1-\alpha}(\sum_{y\in Y_n}p_x^y-\alpha \sum_{y\in Y_n}q_x^y)=
\frac{1}{1-\alpha}(1-\alpha)=1.
$$
The map $\tilde p$ is simplicial since both $p$ and $q$ are simplicial.
Finally, one can check that $p=\alpha q +(1-\alpha)\tilde{p}$.
The converse direction is clear.
\end{proof}

An immediate consequence of this result is a characterization of vertices.

\begin{cor}\label{cor:preorder vertex characterization}
A simplicial distribution $p:X\to D(Y)$ is a vertex if and only if $q\preceq p$ implies that $q=p$.
\end{cor}

\begin{pro}\label{pro:CconVertSC} 
For a simplicial distribution 
$p:X\to D(Y)$, the following properties hold:
\begin{enumerate}
\item If $p$ is a contextual vertex then $p$ is strongly contextual.
\item If $p$ is strongly contextual 
then it is contextual.
\end{enumerate}
\end{pro}
\begin{proof}
To prove (1)
suppose that $\varphi \in \supp(p)$, i.e., $\delta^{\varphi}\preceq p$. 
By Corollary \ref{cor:preorder vertex characterization} this implies that $p=\delta^\varphi$, a contradiction since $p$ is contextual.
Part 
(2) is 
proved in \cite[Pro. 2.12]{kharoof2022simplicial}.    
\end{proof}

\subsection{Null-homotopy}
\label{sec:Null-homotopy}

Simplicial sets provide a combinatorial framework for homotopy theory \cite{goerss2009simplicial}.
In this {section}, we will be restricting our attention to the notion of null-homotopy and defining it in a way best suited for our purposes. First, we introduce an important construction.

\begin{defn}(\!\!\cite{stevenson2011d})
\label{def:decalage} 
{\rm
The d\' ecalage of a simplicial set $Y$ is the simplicial set   $\Dec^0(Y)$ obtained by shifting the simplices of $Y$ down by one degree, i.e., $\Dec^0(Y)_n = Y_{n+1}$, and forgetting the first face and degeneracy maps.
}
\end{defn}  

We have 
a simplicial map $d_0: \Dec^0(Y) \to Y$
that sends a simplex to its $0$-th face. 
For $Y$ we will take the nerve space $N\ZZ_d$.
The set of $n$-simplices of the nerve space consists of $n$-tuples of elements $(a_1,a_2,\cdots,a_n)\in \ZZ_d^n$ together with the face and the degeneracy maps:
$$
\begin{aligned}
d_i(a_1,\cdots,a_n) &= \left\lbrace
\begin{array}{ll}
(a_2,\cdots,a_n) & i=0\\
(a_1,\cdots,a_i+a_{i+1},\cdots,a_n) & 0<i<n\\
(a_1,\cdots,a_{n-1}) & i=n
\end{array} 
\right.\\
s_j(a_1,\cdots,a_n) & = (a_1,\cdots,a_{j},0,a_{j+1},\cdots,a_n).
\end{aligned}
$$

\begin{defn}\label{def:Nnullhom}
{\rm
A simplicial map $\varphi:X \to N\zz_d$ is called null-homotopic if there 
exists
 a simplicial map $\psi: X \to \Dec^0
(N\zz_d)$ such that $d_0\circ \psi=\varphi$.
}
\end{defn} 

Next we provide an alternative description for the d\' ecalage of the nerve space.


\begin{lemma}\label{lem:DecDelta}
There is an isomorphism of simplicial sets
$$
\Delta_{\zz_d} \xrightarrow{\cong} \Dec^0(N\zz_d)
$$
defined in degree $m$ by sending $(a_0,a_1,\cdots,a_{m})$ to the tuple $(a_0,a_1-a_0,a_2-a_1,\cdots,a_{m}-a_{m-1})$.  
\end{lemma}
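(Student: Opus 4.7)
The plan is to construct an explicit inverse and then verify compatibility with the simplicial structure maps, which on both sides are concrete and combinatorial.

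First I would write down the candidate inverse. Since $\Dec^0(N\ZZ_d)_m = (N\ZZ_d)_{m+1} = \ZZ_d^{m+1}$, both sides have the same underlying sets in each degree. The map $f_m(a_0,\dots,a_m) = (a_0, a_1-a_0,\dots,a_m-a_{m-1})$ has the obvious partial-sum inverse $(b_0,b_1,\dots,b_m) \mapsto (b_0, b_0+b_1,\dots, b_0+b_1+\cdots+b_m)$, so bijectivity in each degree is immediate.

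Next I would check compatibility with the face maps. Recall that the face maps of $\Dec^0(N\ZZ_d)$ are the face maps of $N\ZZ_d$ shifted up by one, i.e.\ $d_i^{\Dec^0} = d_{i+1}^{N\ZZ_d}$. I would split into three cases. For $i=0$, one has $f_{m-1}d_0^{\Delta_{\ZZ_d}}(a_0,\dots,a_m) = (a_1, a_2-a_1,\dots,a_m-a_{m-1})$, while $d_1^{N\ZZ_d}f_m(a_0,\dots,a_m)$ sums the first two entries to give $a_0+(a_1-a_0) = a_1$ in the first slot and leaves the rest untouched, yielding the same tuple. For $0 < i < m$, the inner face $d_{i+1}^{N\ZZ_d}$ adds two adjacent differences; here the telescoping identity $(a_i - a_{i-1}) + (a_{i+1} - a_i) = a_{i+1} - a_{i-1}$ precisely reproduces the effect of deleting the $i$-th entry of $(a_0,\dots,a_m)$ and then taking successive differences. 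Finally, for $i=m$ the top face $d_{m+1}^{N\ZZ_d}$ simply truncates the last entry, matching $f_{m-1}d_m^{\Delta_{\ZZ_d}}$.

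Then I would verify compatibility with the degeneracies, using $s_j^{\Dec^0} = s_{j+1}^{N\ZZ_d}$. Applying $f_{m+1}$ to $s_j^{\Delta_{\ZZ_d}}(a_0,\dots,a_m) = (a_0,\dots,a_j,a_j,a_{j+1},\dots,a_m)$ produces a $0$ in the position of the duplicated difference $a_j - a_j$, which is exactly what $s_{j+1}^{N\ZZ_d}$ does when inserting a $0$ into $f_m(a_0,\dots,a_m)$ at the correct spot.

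There is no real obstacle here; the argument is a routine index-chasing verification. The only point of care is aligning the $\Dec^0$-shift of indices with the fact that $f_m$ turns absolute coordinates into consecutive differences, so that the telescoping in the inner face case and the zero from the duplicated entry in the degeneracy case land in precisely the positions dictated by the nerve structure.
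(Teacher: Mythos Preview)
Your proposal is correct; the verification is routine and all the index alignments you describe are accurate. The paper itself does not give a proof but simply refers to \cite[Lemma 3.30]{kharoof2023homotopical}, so your explicit verification is, if anything, more complete than what the paper provides here.
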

\begin{proof}
See \cite[Lemma 3.30]{kharoof2023homotopical}.
\end{proof}

Therefore the canonical map $d_0 : \Dec^0(N\zz_d) \to N\zz_d$ can be replaced by the map 
\begin{equation}\label{eq:kappa}
\kappa:\Delta_{\zz_d} \to N\zz_d
\end{equation}
defined in degree $m$ by sending $(a_0,a_1,\cdots,a_{m})$ to 
$(a_1-a_0,a_2-a_1,\cdots,a_{m}-a_{m-1})$.
\begin{prop}\label{pro:MmapNull}
The simplicial map $\varphi: C^{(n)} \to N\zz_d$ is {null-homotopic} if and only if $\sum_{i=1}^n\varphi_{\sigma_i}=0$.    
\end{prop}
\begin{proof}
See \cite[Proposition 3.4]{kharoof2023homotopical}.
\end{proof}


\subsection{{Vertices of {faces}}}
\label{sec:Vertices of facets}

Next, we define the {face} associated to a deterministic simplicial distribution and show that homotopy can detect its vertices. 
 
\begin{defn}\label{FacdefNH}
{\rm
Given $\varphi \in \catsSet(X,N\zz_d)$ we define the face $\Facc(\varphi)$ at $\varphi$ to be the preimage of 
$\delta^\varphi$
under the map 
$$
D(\kappa)_\ast:\catsSet(X,D(\Delta_{\zz_d})) \to \catsSet(X,D(N\zz_d))
$$ 
where $\kappa$ is as given in 
(\ref{eq:kappa}).
}
\end{defn}

In other words, a simplicial distribution  $p : X \to D(\Delta_{\zz_d})$ belongs to $\Facc(\varphi)$ if and only if it makes the following diagram commute
$$
\begin{tikzcd}[column sep=huge,row sep=large]
 & D(\Delta_{\zz_d}) 
\arrow[d,"D(\kappa)"]  \\
X \arrow[ru,"p"]  \arrow[r,"\delta^{\varphi}"']  & D(N\zz_d)
\end{tikzcd}
$$
{Our main result in this section 
is the homotopical detection of the vertices of {faces}.} 

\begin{prop}\label{pro:FfacetVertSC}
Given $\varphi \in \catsSet(X,N\zz_d)$, we have the following:
\begin{enumerate}
\item Every vertex in $\Facet(\varphi)$ is a vertex in 
$\sDist(X,\Delta_{\zz_d})$.
\item If $\varphi:X\to N\ZZ_d$ is not null-homotopic, then every simplicial distribution in 
$\Facc(\varphi)$ is strongly contextual.     
\end{enumerate}
\end{prop}
\begin{proof}
Part (1) follows from 
Lemma \ref{lem:DecDelta}
and
by part (1) of Proposition 2.21 in \cite{kharoof2023homotopical}.
Part (2) follows from 
Proposition 2.22 in \cite{kharoof2023homotopical}.
\end{proof}

\begin{cor}\label{cor:not null and single}
If $\varphi$ is not null-homotopic and $\Face(\varphi)=\set{p}$ then 
$p$ is a contextual vertex of $\sDist(X,\Delta_{\ZZ_d})$.
\end{cor}

\section{{Bundle} scenarios}
\label{sec:bundle scenarios}


{Bundle scenarios are first introduced in \cite{barbosa2023bundle} for developing a resource theory of contextuality in the framework of simplicial distributions. 
In this section we recall the basics of this theory and illustrate how it can be applied to study cycle scenarios with arbitrary outcomes.
}

\subsection{Simplicial distributions on bundles}

A bundle scenario consists of a simplicial set map
$$
f: E\to X
$$
where 
\begin{itemize}
\item $X$ represents the space of measurements, and
\item $E$ represents the space of events where each fiber $f^{-1}(x)$ over a simplex $x\in X_n$ represents the set of outcomes for the measurement.
\end{itemize}
This formalism is introduced in \cite{barbosa2023bundle}, where the notion is more restrictive. There, only certain kinds of simplicial set maps are termed bundle scenarios. However, the theory also works in this generality of arbitrary simplicial maps. 
With this generality, other interesting cases include
principal bundles as scenarios  
which yield
a twisted theory of simplicial distributions \cite{okay2024twisted}.
Every scenario $(X,Y)$ in the ordinary sense can be regarded as a bundle scenario {given by} the projection map ${\pr_1:}X\times Y \to {X}$. 

A simplicial distribution on the bundle scenario $f:E\to X$ is a family of distributions
$$
\set{p_x \in D(f^{-1}(x)):\, x\in X_n}_{n\geq 0}
$$
satisfying a compatibility condition induced by the simplicial structure of the measurements and the events.
More formally, a simplicial distribution $p$ on the bundle scenario $f$ is given by a commutative diagram
$$
\begin{tikzcd}[column sep=huge,row sep=large]
 & D(E) \arrow[d,"D(f)"] \\
X\arrow[ru,"p"] \arrow[r,"\delta_X"] & D(X)
\end{tikzcd}
$$
The set of simplicial distributions on $f$ is denoted by $\sDist(f)$. {For every simplicial scenario $(X,Y)$ {the set} $\sDist(X,Y)$ is in bijective correspondence with 
{$\sDist(\pr_1)$.}
}

{Here is an example of a bundle scenario which will be important for us in the next section.

\begin{example}\label{ex:BXm}
{\rm
Given a $1$-dimensional simplicial set $X$ and a 
tuple
of natural numbers $\vec{m}=(m_x)_{x\in X_0}$, where $m_x \geq 2$, 
we define the $1$-dimensional simplicial set $E(X,\vec{m})$ by setting 
\begin{itemize}
\item $E(X,\vec{m})_0=\bigsqcup_{x \in X_0} \set{x}\times\zz_{m_x}$, and
\item $E(X,\vec{m})_1=
\bigsqcup_{\sigma \in X_1} \set{\sigma}\times \zz_{m_{d_1(\sigma)}}\times \zz_{m_{d_0(\sigma)}}$ where 
\end{itemize}
$$
\begin{aligned}
d_1(\sigma,(i,j)) &=(d_1(\sigma),i)\\
 d_0(\sigma,(i,j)) &=(d_0(\sigma),j)\\
 s_0(x,i) &=(s_0(x),(i,i)).
\end{aligned} 
$$ 
The evident projection map induces a bundle scenario  
$$
f_m: E(X,\vec{m}) \to X.
$$
}
\end{example}
%
}

\subsection{{Morphisms of bundle scenarios}}

{The advantage of introducing bundle scenarios lies in the notion of morphisms between them.} 
Given bundle scenarios $f:E\to X$ and $g:E'\to {X'}$, 
a morphism $(\pi,\alpha):f\to g$ between them is given by a commutative diagram
$$
\begin{tikzcd}[column sep=huge,row sep=large]
E \arrow[d,"f"] & \pi^*(E)\arrow[l]  \arrow[r,"\alpha"] \arrow[d] & E' \arrow[d,"g"] \\
X & {X'} \arrow[l,"\pi"] \arrow[r,equal] & {X'}
\end{tikzcd} 
$$
where the left-hand {square} is a pull-back. 

Here is an example of a map between bundle scenarios {where $\pi$ is the identity map:}

\begin{example}\label{ex:Bbundlemap}
{\rm
Given injective maps $t_i:\zz_{d_1}\to \zz_{d_2}$, $1 \leq i \leq n$, we define a simplicial map $T: C^{(n)}\times \Delta_{\zz_{d_1}} \to C^{(n)}\times \Delta_{\zz_{d_2}}$ by setting
$$
T(v_i,a)=(v_i,t_i(a))
$$
where 
$a \in \zz_{d_1}$ {and $v_i$ is a vertex {of} $C^{(n)}$ (see Definition \ref{def:cycle scenario})}.
{To see that this assignment  determines the simplicial map, observe that}
$T(\sigma_i,(a,b))$ 
{is specified by}
$(\sigma_i,(t_i(a),t_{i+1}(b)))$, and for every {degenerate simplex $\sigma$ of}
$C^{(n)}$
the image of 
$(\sigma,(a_1,\cdots,a_m))$ is determined 
{using the degeneracy maps.}
%
}
\end{example}
%

\begin{defn}
{\rm
Given a morphism $(\pi,\alpha):f\to g$ between two bundle scenarios and a simplicial distribution $p$ on $f$ the push-forward distribution $q=(\pi,\alpha)_*(p)$ on $g$ is defined by
$$
q_{{x'}}^\theta = \sum_{\gamma:\alpha(\gamma,{x'})=\theta} p_{\pi({{x'}})}^\gamma
$$
where 
$\gamma\in E_n$ in the sum also satisfies that 
 $f(\gamma)=\pi({{x'}})$. 
}
\end{defn}

{Note that in the case 
{where}
$\pi$ is the identity map, we have $(\Id,\alpha)_*(p)=D(\alpha)\circ p$. }

\begin{lemma}\label{lem:CharecImi}
Consider a morphism of bundle scenarios
\begin{equation}
\begin{tikzcd}[column sep=huge,row sep=large]
E' 
\arrow[rr,hook,"i"]
\arrow[dr,"f"'] && E
\arrow[dl,"g"] \\
&  X &  
\end{tikzcd}
\end{equation}
where $i$ is injective. 
A simplicial distribution $q \in \sDist(g)$ lies in the image of $i_\ast: \sDist(f) \to \sDist(g)$ if and only if $q_x(e)=0$ for every $x \in X_n$ and $e \in E_n-\Image (i_n)$.    
\end{lemma}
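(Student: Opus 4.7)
The plan is to unpack the definition of $i_{\ast}$ and translate both directions into checkable pointwise conditions on the distributions. Since the base map $\pi$ in this morphism of bundle scenarios is the identity, we have $i_{\ast}(p) = D(i)\circ p$, so concretely
\[
(i_{\ast}(p))_{x}(e) \;=\; \sum_{e' \in E'_{n}\,:\, i_{n}(e')=e} p_{x}(e')
\]
for every $x\in X_{n}$ and $e\in E_{n}$. The necessity direction is then immediate: if $q=i_{\ast}(p)$ and $e \notin \Image(i_{n})$, the indexing set of the sum is empty, so $q_{x}(e)=0$.

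For sufficiency I would use the injectivity of $i_{n}$ to define a candidate preimage. Given $q\in \sDist(g)$ satisfying the vanishing hypothesis, set
\[
p_{x}(e') \;:=\; q_{x}(i_{n}(e')), \qquad e' \in E'_{n},\ x\in X_{n}.
\]
Three things then need to be checked: (a) each $p_{x}$ is a probability distribution supported on $f^{-1}(x)$, so that $p$ is a legitimate simplicial distribution on $f$; (b) $p$ is simplicial, i.e.\ commutes with all face and degeneracy maps; and (c) $i_{\ast}(p)=q$. Items (a) and (c) are essentially formal: total mass is preserved because, by injectivity, $\sum_{e'} q_{x}(i_{n}(e'))$ equals the sum of $q_{x}$ over $\Image(i_{n})$, which is $1$ by the vanishing hypothesis; the support condition $p_{x}(e')\neq 0\Rightarrow f(e')=x$ follows from $g\circ i = f$ and the fact that $q_{x}$ is supported on $g^{-1}(x)$; and (c) reduces to the same injectivity plus vanishing argument.

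The only step requiring real care is (b). After unwinding definitions, the face-map compatibility $D(d_{i})(p_{x}) = p_{d_{i}x}$ becomes the identity
\[
\sum_{e' \in E'_{n}\,:\, d_{i}(e') = e''} q_{x}(i_{n}(e'))
\;=\;
\sum_{e \in E_{n}\,:\, d_{i}(e) = i_{n-1}(e'')} q_{x}(e)
\]
for every $e'' \in E'_{n-1}$. Simpliciality of $i$ gives $d_{i}\circ i_{n} = i_{n-1}\circ d_{i}$, and injectivity of $i_{n-1}$ lets me rewrite the left-hand side as a sum over those $e \in \Image(i_{n})$ with $d_{i}(e) = i_{n-1}(e'')$; the vanishing hypothesis on $q$ then kills the missing terms $e \notin \Image(i_{n})$ on the right-hand side, closing the equality. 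The degeneracy compatibility is analogous and slightly easier, since $s_{j}(e') \in \Image(i)$ automatically whenever $e'$ is. The bookkeeping in this simpliciality check is the only real obstacle; everything else reduces to injectivity plus the support hypothesis.
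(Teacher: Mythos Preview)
Your proposal is correct and follows essentially the same route as the paper: the forward direction is the same empty-sum observation, and for the converse both you and the paper define $p_{x}(e') = q_{x}(i_{n}(e'))$ and then verify simpliciality via the identity you wrote, using injectivity of $i$ together with the vanishing hypothesis to identify the two sums. The only cosmetic difference is that the paper packages your item~(a) as the single line $D(f)\circ p = D(g)\circ D(i)\circ p = D(g)\circ q = \delta_{X}$ after first establishing $D(i)\circ p = q$, rather than checking total mass and fiber support directly.
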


\begin{proof}
Suppose we have $p \in \sDist(f)$ such that $i_\ast(p)=q$. 
Then for $x \in X_n$ and $e \in E_n-\Image (i_n)$ we have
$$
q_x(e)=(i_\ast(p))_x(e)=D(i_n)(p_x)(e)=
\sum_{e':i_n(e')=e}p_x(e').
$$
{So}
if $e \notin \Image (i_n)$ then $q_x(e)=0$.

For the converse,
suppose that $q_x(e)=0$ for every $x \in X_n$ and $e \in E_n-\Image (i_n)$. 
For 
$n \geq 0$, we define $p_n:X _n\to D(E'_n)$ to be $p_x(e')=q_x(i_n(e'))$ 
where
 $x \in X_n$ and $e' \in E'_n$. 
First we prove that $\set{p_n}_{n\geq 0}$ form a simplicial map $p :X \to D(E')$. 
Given $x \in X_n$ and $e' \in E'_{n-1}$, we have
\begin{align*} 
p_{d_k(x)}(e')=q_{d_k(x)}(i_{n-1}(e'))=D(d_k)(q_x)(i_{n-1}(e'))=
\sum_{e:\,d_k(e)=i_{n-1}(e')}q_x(e).
\end{align*}
Since
$q_x(e)=0$ 
for
$e \notin \Image (i_n)$ {and because $i_n$ is injective} we have 
$$
\begin{aligned}
\sum_{e:\,d_k(e) =i_{n-1}(e')}q_x(e)
&=\sum_{e'':\,d_k(i_n(e''))=i_{n-1}(e')}q_x(i_n(e''))\\
&=\sum_{e'':\,i_{n-1}(d_k(e''))=i_{n-1}(e')}q_x(i_n(e''))\\
& =\sum_{e'':\,d_k(e'')=e'}q_x(i_n(e'')).
\end{aligned} 
$$
On the other hand, we have
\begin{align*} 
D(d_k)(p_x)(e')=
\sum_{e'':\,d_k(e'')=e'}p_x(e'')
=\sum_{e'':\,d_k(e'')=e'}q_x(i_n(e'')).
\end{align*}
This shows that
$p_{d_k(x)}=D(d_k)(p_x)$. Analogously, one can prove that $p_{s_k(x)}=D(s_k)(p_x)$. 
Now, we prove that 
$D(i) \circ p =q$. Given $x \in X_n$ and $e \in E_n$, 
since
$i_n$ is injective we obtain 
$$
D(i_n)(p_x)(e)=\sum_{e':\,i_n(e')=e}p_x(e')=
\sum_{e':\,i_n(e')=e}q_x(i_n(e')) =\begin{cases}
    q_x(e) & \text{if} \; \; e \in \Image{(i_n)} \\
    0  & \text{else}.
\end{cases}
$$
Using this  we conclude that 
$$
D(f) \circ p {=D(g\circ i) \circ p}= D(g) \circ D(i) \circ p=D(g) \circ q=\delta_X.
$$
Therefore $p\in \sDist(f)$.
\end{proof}
%



{Our main result in this section is the preservation of some of the important features of distributions along the push-forward map.}

\begin{prop}\label{pro:BbundleMor}
Consider a morphism of bundle scenarios
\begin{equation}
\begin{tikzcd}[column sep=huge,row sep=large]
E' 
\arrow[rr,hook,"i"]
\arrow[dr,"f"'] && E
\arrow[dl,"g"] \\
&  X &  
\end{tikzcd}
\end{equation}
where $i$ is injective. 
For $p \in \sDist(f)$, we have the following:
\begin{enumerate}
\item $p$ is a deterministic distribution if and only if $i_{\ast}(p)$ is a deterministic distribution.
\item $p$ is a vertex if and only if $i_{\ast}(p)$ is a vertex.
\item $p$ is a contextual vertex if and only if $i_{\ast}(p)$ is a contextual vertex.
\end{enumerate}
\end{prop}
\begin{proof}
Part (1):
Consider
a deterministic distribution 
$\delta^{\varphi}=\delta_{E'}\circ\varphi \in \sDist(f)$, where $\varphi:X \to E'$ is a section for $f$. 
In this case $i\circ\varphi $ is a section for $g$ and 
$
i_\ast(\delta^{\varphi})=D(i)\circ \delta_{E'} \circ \varphi =
 \delta_{E}\circ i \circ \varphi= \delta^{i \circ \varphi}
$.

On the other hand, if $i_\ast(p)=\delta^{\psi}$ for some section $\psi$ for $g$, then for every $x \in X_n$ we have  
$$
1=\delta^{\psi_x}(\psi_x)=
i_\ast(p)_{x}(\psi_x)=D(i_n)(p_x)(\psi_x)=\sum_{e':\,i_n(e')=\psi_x}p_x(e').
$$
Since the map $i_n$ is injective, 
 there is a unique $e' \in E'_n$ with $i_n(e')=\psi_x$ and $p_x(e')=1$. This means that $p$ is a deterministic distribution.

Part (2): Suppose that $p \in \sDist(f)$ is a vertex. If there 
exist
 $q,s \in \sDist(g)$ and $0<\alpha <1$ such that $i_\ast(p)=\alpha q +(1-\alpha)s$, then by Lemma \ref{lem:CharecImi} we have 
$$
0=i_\ast(p)_x(e)=\alpha q_x(e) +(1-\alpha)s_x(e)
$$
for every $x \in X_n$ and $e \in E_n-\Image (i_n)$.
This implies that $q_x(e)=s_x(e)=0$. Again, by Lemma \ref{lem:CharecImi} we have $\tilde{q}$ and $\tilde{s}$ in $\sDist(f)$ such that $i_\ast(\tilde{q})=q$ and $i_\ast(\tilde{s})=s$. Therefore, using the convexity of $i_\ast$ we obtain
$$
i_\ast(p)=\alpha i_\ast(\tilde{q}) +(1-\alpha)i_\ast(\tilde{s})=i_\ast(\alpha\tilde{q} +(1-\alpha)\tilde{s}).
$$
Since $i_\ast$ is injective, we obtain that $p=\alpha\tilde{q} +(1-\alpha)\tilde{s}$. Moreover, since $p$ is a vertex we obtain that $\tilde{q}=\tilde{s}$. Therefore 
$q=i_\ast(\tilde{q})=i_\ast(\tilde{s})=s$.

Now, suppose that $i_\ast(p)$ is a vertex. By Proposition 5.3 from
\cite{barbosa2023bundle}
the map $i_\ast$ is  convex. By Proposition 5.15 \cite{kharoof2022simplicial} every vertex in $(i_\ast)^{-1}(i_{\ast}(p))$ is a vertex in $\sDist(f)$. Since $i_{\ast}$ is injective we obtain that $(i_\ast)^{-1}(i_{\ast}(p))=\set{p}$. Therefore $p$ is a vertex.

Part (3):
Follows directly from parts (1)-(2) and the fact that a vertex can either be  deterministic  or contextual. 
\end{proof}

\section{{C}ycle scenario with {arbitrary} outcome}
\label{sec:Cycle scenario with arbitrary outcome}


{In this section, we prove our main result on the classification of the contextual vertices of the polytope of simplicial distributions in cycle scenarios with arbitrary outcomes.}

\begin{defn}\label{Def:CccyclicDis}
{\rm
Let $1\leq k \leq d$. A simplicial distribution $p:C^{(n)} \to D(\Delta_{\zz_d})$ is called a \emph{$k$-order {cycle} distribution} if there exists a finite sequence 
$\vec{a}= (a_1^{(1)},\cdots,a_n^{(1)};a_1^{(2)},\cdots,a_n^{(2)};\cdots;
a_1^{(k)},\cdots,a_n^{(k)})$ of elements in $\zz_d$ such that $a_i^{(j)}\neq a_i^{(s)}$ for every $1\leq i \leq n$, $j\neq s$, and the distribution is defined by
$$
p_{\sigma_i}^{ab} = \left\lbrace
\begin{array}{ll}
\frac{1}{k} & (a,b)=(a_i^{(j)},a_{i+1}^{(j)})\; \, \text{for some $1\leq j \leq k$} \\  
0 & \text{otherwise,}
\end{array}
\right.
$$
for $1\leq i \leq n-1$, and
$$
p_{\sigma_n}^{ab} = \left\lbrace
\begin{array}{ll}
\frac{1}{k} & (a,b)=(a_n^{(j)},a_{1}^{(j+1)}) \; \, \text{for some $1\leq j \leq k-1$}\\  
\frac{1}{k} & (a,b)=(a_n^{(k)},a_{1}^{(1)})\\  
0 & \text{otherwise.}
\end{array}
\right.
$$
}
\end{defn}
Note that $1$-order cycle distributions
coincide with  deterministic distributions. 
\begin{example}{\rm
The distribution $p:C^{(2)} \to D(\Delta_{\zz_4})$   defined by 
$$
p_{\sigma_1}=\begin{pmatrix} 
	0 & \frac{1}{3} & 0 & 0 \\
 0 & 0 & 0 & 0 \\
 0 & 0 & 0 & \frac{1}{3} \\
 0 & 0 & \frac{1}{3} & 0 \\ 
	\end{pmatrix}
\;\;\text{ and } \;\;
	p_{\sigma_2}=\begin{pmatrix} 
	0 & 0 & 0 & 0\\
 0 & 0 & 0 & \frac{1}{3} \\
 0 & 0 & \frac{1}{3} & 0 \\
 \frac{1}{3} & 0 & 0 & 0 \\ 
	\end{pmatrix}$$ 
is a $3$-order cycle distribution that comes from the sequence $\vec{a}=(0,1;3,2;2,3)$.
}
\end{example}

{Our key observation is that the $k$-order cycle distributions are contextual vertices that are detected homotopically. For this we rely on the homotopical tools developed in Sections \ref{sec:Null-homotopy} and \ref{sec:Vertices of facets}.
}

\begin{prop}\label{pro:CyclicVrtex}
For $k \geq 2$, every $k$-order cycle distribution is a contextual vertex.
\end{prop}
\begin{proof}
We define a simplicial map $\varphi:C^{(n)} \to N\zz_k$ by setting
$$
\varphi_{\sigma_i}=\left\lbrace
\begin{array}{ll}
1 & i=n\\  
0 & \text{otherwise}.
\end{array}
\right.
$$
According to Proposition \ref{pro:MmapNull} the map $\varphi$ is not null-homotopic.  
Let $p$ be a distribution in $\Facet(\varphi)$, i.e., $D(\kappa)(p)=\delta^{\varphi}$. For $1\leq i\leq n-1$ we have $D(\kappa)(p)_{\sigma_i}=\delta^0$, i.e., 
$\sum_{a=0}^{k-1} p_{\sigma_i}^{aa}=1$, {which means that $p_{\sigma_i}^{ab}=0$ if $b\neq a$}. In addition,  $D(\kappa)(p)_{\sigma_n}=\delta^1$, and thus
$\sum_{a=0}^{k-1} p_{\sigma_n}^{a(a+1)}=1$, that means $p_{\sigma_n}^{ab}=0$ if $b\neq a+1$. So 
$$
p^b_{d_0(\sigma_i)}=\sum_{a=0}^{k-1}p_{\sigma_i}^{ab}= p_{\sigma_i}^{bb} \;\;\text{and}\; \;p^a_{d_1(\sigma_i)}=\sum_{b=0}^{k-1}p_{\sigma_i}^{ab}= p_{\sigma_i}^{aa},
$$
{whereas we also have}
$$
p^b_{d_0(\sigma_n)}=\sum_{a=0}^{k-1}p_{\sigma_n}^{ab}= p_{\sigma_n}^{(b-1)b} \;\;\text{and}\; \;p^a_{d_1(\sigma_n)}=\sum_{b=0}^{k-1}p_{\sigma_n}^{ab}= p_{\sigma_n}^{a(a+1)} .
$$
%
%
{Therefore}
$$
p_{\sigma_1}^{00}=p_{d_0(\sigma_1)}^0=p_{d_1(\sigma_2)}^0=p_{\sigma_2}^{00}=\cdots=p_{\sigma_{n-1}}^{00}=p_{d_0(\sigma_{n-1})}^0=p_{d_1(\sigma_n)}^0=p_{\sigma_n}^{01}
$$
and $p_{\sigma_n}^{01}=p_{d_0(\sigma_n)}^1=p_{d_1(\sigma_1)}^1=p_{\sigma_1}^{11}$. Similarly, we have
$$
p_{\sigma_1}^{11}=p_{d_0(\sigma_1)}^1=p_{d_1(\sigma_2)}^1=p_{\sigma_2}^{11}=\cdots=p_{\sigma_{n-1}}^{11}=p_{d_0(\sigma_{n-1})}^1=p_{d_1(\sigma_n)}^1=p_{\sigma_n}^{12}
$$
and $p_{\sigma_n}^{12}=p_{d_0(\sigma_n)}^2=p_{d_1(\sigma_1)}^2=p_{\sigma_1}^{22}$. At the end, we obtain that
$$
p_{\sigma_1}^{(k-1)(k-1)}=p_{d_0(\sigma_1)}^{k-1}=p_{d_1(\sigma_2)}^{k-1}=p_{\sigma_2}^{(k-1)(k-1)}=\cdots=p_{\sigma_{n-1}}^{(k-1)(k-1)}=p_{d_0(\sigma_{n-1})}^{k-1}=p_{d_1(\sigma_n)}^{k-1}=p_{\sigma_n}^{(k-1)0}
$$
and $p_{\sigma_n}^{(k-1)0}=p_{d_0(\sigma_n)}^0=p_{d_1(\sigma_1)}^0=p_{\sigma_1}^{00}$.
We conclude that $p_{\sigma_i}^{aa}=\frac{1}{k}$ for every $1 \leq i \leq n-1$ and $p_{\sigma_n}^{a(a+1)}=\frac{1}{k}$. We proved that the only distribution in $\Facet(\varphi)$ is the $k$-cycle distribution that corresponds to the following sequence
$$
(0,\cdots,0;1,\cdots,1;\cdots;k-1,\cdots,k-1).
$$
Let us denote this distribution by $q$. By
Corollary \ref{cor:not null and single}
we obtain that $q$ is a contextual vertex. 
Now, given a $k$-{order} cycle distribution  $p:C^{(n)} \to D(\Delta_{\zz_d})$ as in Definition \ref{Def:CccyclicDis}. We define $T: C^{(n)}\times \Delta_{\zz_{k}} \to C^{(n)}\times \Delta_{\zz_{d}}$ by setting
$$
T(v_i,j)=(v_i,t_i(j))
$$
where $t_i(j)=a_i^{(j)}$ (see Example \ref{ex:Bbundlemap}). The simplicial map $T$ is injective since  $a_i^{(j)}\neq a_i^{(r)}$ for every $1\leq i \leq n$, $j\neq r$. Therefore by part (3) of Proposition \ref{pro:BbundleMor} we obtain that the distribution $p=T_\ast (q)$ is a contextual vertex. 
\end{proof}
%


{
The converse of this result is also true, leading to the main result of our paper, which characterizes the vertices of cycle scenarios with arbitrary outcomes. For the converse, our main tool is the characterization of vertices using the preorder introduced in Section \ref{sec:The vertex support}.
}

\begin{thm}\label{thm:vert1}
A simplicial distribution   
$$
p:C^{(n)} \to D(\Delta_{\ZZ_d})
$$
is a contextual vertex if and only if it is a $k$-order cycle distribution for some $k \geq 2$.   
\end{thm}
\begin{proof}
Proposition \ref{pro:CyclicVrtex} shows one direction. For the other direction, 
let $p:C^{(n)} \to D(\Delta_{\zz_d})$ be a contextual vertex. 
There is $a_1,a_2 \in \zz_d$ such that $p_{\sigma_1}^{a_1a_2}\neq 0$, which means that    $p_{d_1(\sigma_2)}^{a_2}=
p_{d_0(\sigma_1)}^{a_2}\neq 0$. Therefore there exists $a_3 \in \zz_d$ such that  $p_{\sigma_2}^{a_2a_3}\neq 0$. By this process we obtain $a_1,a_2,\cdots,a_{n},a_{n+1}$ such that $p_{\sigma_i}^{a_ia_{i+1}}\neq 0$ for all $1 \leq i \leq n$.
If $a_{n+1}=a_1$ we can define the simplicial map $\varphi:C^{(n)} \to \Delta_{\zz_d}$ by {setting} $\varphi^{\sigma_i}=(a_i,a_{i+1})$ for $1\leq i\leq n$. 
Then $\varphi \in \supp(p)$, but by part (1) of Proposition \ref{pro:CconVertSC} $p$ is strongly contextual. Therefore {$a_{n+1} \neq a_1$ and we can continue the process. Then for some $k\geq 2$ we get} the following sequence
$$
a_1^{(1)},\cdots,a_n^{(1)};a_1^{(2)},\cdots,a_n^{(2)};\cdots;a_1^{(k)},\cdots,a_n^{(k)}\in \zz_d
$$
such that
\begin{enumerate}
    \item $a_i^{(j)}\neq a_i^{(s)}$ for $1\leq i \leq n$, $j\neq s$,
    \item $p_{\sigma_i}^{a_i^{(j)}a_{i+1}^{(j)}}\neq 0$ for $1\leq i \leq n-1$, $1\leq j \leq k$, 
    \item $p_{\sigma_n}^{a_n^{(j)}a_1^{(j+1)}}\neq 0$ for $1\leq j \leq k-1$, and $p_{\sigma_n}^{a_n^{(k)}a_1^{(1)}}\neq 0$.
\end{enumerate}
Let $q$ be the corresponding {$k$-order} cycle distribution for the sequence above (see Definition \ref{Def:CccyclicDis}).
 We conclude that $q \preceq p$. Then by Corollary \ref{cor:preorder vertex characterization}
we obtain that $p=q$.
\end{proof}

{
With the explicit form of the $k$-order cycle distributions we can provide a formula for the number of vertices.

\begin{cor}\label{cor:number of vertices}
The number of vertices of the polytope of simplicial distributions on the $n$-cycle scenario with $d$-outcomes is given by the formula
\[
V_{n,d} = \sum_{k=1}^d \binom{d}{k}^n (k!)^{n-1} (k-1)!. 
\]
\end{cor}
\begin{proof}
Let $V_{n,d,k}$ denote the number of $k$-order cycle distributions on the $n$-cycle scenario with $d$-outcomes. As described in Definition \ref{Def:CccyclicDis} a $k$-order distribution is specified by a $k\times n$ matrix over $\ZZ_d$:
\[
\begin{pmatrix} 
a_1^{(1)} & a_2^{(1)} & \cdots & a_n^{(1)} \\
a_1^{(2)} & a_2^{(2)} & \cdots & a_n^{(2)} \\
\vdots & \vdots & & \vdots \\
a_1^{(k)} & a_2^{(k)} & \cdots & a_n^{(k)} \\
\end{pmatrix}
\]
where in every column the entries are distinct. The number of such matrices is given by  
\[
\binom{d}{k}^n (k!)^n.
\]  
The first factor accounts for choosing $k$ distinct elements for each of the $n$ columns, and the second factor counts the number of ways to order the entries within each column.  
According to the formula for a $k$-order cycle distribution, each such matrix determines a unique distribution up to a cyclic permutation of its $k$ rows. Therefore, the number of distinct $k$-order cycle distributions is  
\[
V_{n,d,k} = \frac{\binom{d}{k}^n (k!)^n}{k} = \binom{d}{k}^n (k!)^{n-1} (k-1)!.
\]
\end{proof}
}

{Consider the} bundle scenario 
$f_m:E(C^{(n)},\vec{m}) \to C^{(n)}$ in Example \ref{ex:BXm}. 
By setting $m=\max\{m_x: \, x \in C^{(n)}_0\}$ we obtain a bundle morphism 
%
\begin{equation}
\begin{tikzcd}[column sep=huge,row sep=large]
E(C^{(n)},\vec{m}) 
\arrow[rr,hook,"i"]
\arrow[dr,"{f_m}"'] && C^{(n)} \times \Delta_{\zz_{m}}
\arrow[dl,""] \\
&  C^{(n)} &  
\end{tikzcd}
\end{equation}
where $i$ is injective.

\begin{defn}{\rm
We call a simplicial distribution $p$ on the bundle scenario $f_m$ 
a \emph{$k$-order cycle distribution} if $i_\ast(p)$ is a $k$-order cycle distribution
in the sense of Definition \ref{Def:CccyclicDis}.
}
\end{defn}

 By Theorem \ref{thm:vert1} and part $(3)$ of Proposition \ref{pro:BbundleMor} we obtain the following result.

\begin{corollary}\label{cor:kcyclicvertex}
A distribution on the bundle scenario $f_m$ 
is a contextual vertex if and only if it is a $k$-order cycle distribution for some $k \geq 2$.   
\end{corollary}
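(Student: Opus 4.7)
The plan is a short chain: apply Theorem \ref{thm:vert1} to characterize the contextual vertices of the polytope $\sDist(C^{(n)}, \Delta_{\ZZ_m})$ for $m = \max_x m_x$, and then transport the characterization across the injective bundle morphism $i: E(C^{(n)}, \vec{m}) \hookrightarrow C^{(n)} \times \Delta_{\ZZ_m}$ using part (3) of Proposition \ref{pro:BbundleMor}, which says that $p$ is a contextual vertex of $\sDist(f_m)$ if and only if the push-forward $i_*(p)$ is a contextual vertex of $\sDist(C^{(n)}, \Delta_{\ZZ_m})$.

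I would then execute the two implications as follows. First, suppose $p \in \sDist(f_m)$ is a contextual vertex. Proposition \ref{pro:BbundleMor}(3) yields that $i_*(p)$ is a contextual vertex of the ordinary cycle polytope, and Theorem \ref{thm:vert1} identifies $i_*(p)$ with a $k$-order cycle distribution for some $k \geq 2$; by the definition of a $k$-order cycle distribution on $f_m$, this means $p$ itself is such a distribution. Conversely, if $p$ is a $k$-order cycle distribution on $f_m$ with $k \geq 2$, then $i_*(p)$ is a $k$-order cycle distribution on the uniform cycle scenario by definition, hence a contextual vertex by Theorem \ref{thm:vert1}, and applying Proposition \ref{pro:BbundleMor}(3) once more we conclude that $p$ is a contextual vertex.

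There is no genuine obstacle here, but one bookkeeping point is worth flagging: one should check that whenever $i_*(p)$ is realized as a $k$-order cycle distribution on $C^{(n)} \times \Delta_{\ZZ_m}$, the witnessing sequence $\vec{a}$ automatically takes values in the sub-alphabets $\ZZ_{m_{v_i}} \subset \ZZ_m$, so that the notion of $k$-order cycle distribution on $f_m$ is consistent with the fibers of $f_m$. This is immediate from Lemma \ref{lem:CharecImi}: distributions in the image of $i_*$ are precisely those supported on $\Image(i)$, so the nonzero entries $p_{\sigma_i}^{a_i^{(j)}\,a_{i+1}^{(j)}}$ force $a_i^{(j)} \in \ZZ_{m_{v_i}}$, matching the fiber data on the bundle side.
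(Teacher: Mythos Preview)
Your proposal is correct and follows exactly the paper's approach: the paper's proof is the single sentence ``By Theorem \ref{thm:vert1} and part (3) of Proposition \ref{pro:BbundleMor} we obtain the following result,'' and you have simply unpacked both implications of this transport argument. The extra bookkeeping remark you make about the witnessing sequence $\vec{a}$ landing in the sub-alphabets $\ZZ_{m_{v_i}}$ via Lemma \ref{lem:CharecImi} is a helpful clarification that the paper leaves implicit.
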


Corollary \ref{cor:kcyclicvertex}  generalizes Theorem 1 from \cite{barrett2005nonlocal}. 
{In that reference the authors describe the contextual (non-local) 
{extremal distributions (vertices)}
on {the $4$-circle.}
Our result above extends this characterization to $n$-circle spaces
with
arbitrary outcomes. 

%
%

%

\section{Scenarios obtained by gluing}
\label{sec:scenarios obtained by gluing}

In the theory of simplicial distributions an important strategy is to decompose a measurement space into smaller pieces.
In this section we {utilize this strategy to} provide a characterization which tells us when a simplicial distribution on a union of two measurement spaces is a vertex. 
Then we apply this characterization to describe vertices of scenarios obtained by gluing cycle scenarios.

{We begin by an observation on the relationship between convex decompositions and the preorder on simplicial distributions.} 
 
\begin{lemma}\label{lem:qinConv}
If $q\in \conv(\Vsupp(p))$, then $q\preceq p$.    
\end{lemma}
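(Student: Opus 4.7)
The plan is to unwind the definitions directly. The set $\Vsupp(p)$ consists of vertices $r$ of $\sDist(X,Y)$ with $r\preceq p$, and the preorder $r\preceq p$ is defined coordinatewise by the implication ``$r_x^y\neq 0 \Rightarrow p_x^y\neq 0$'' ranging over all $n\geq 0$, $x\in X_n$, and $y\in Y_n$. The claim is that this coordinatewise nonzero-containment relation is stable under convex combinations, which is essentially tautological.

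First, under the standing finiteness hypotheses on $X$ and $Y$, $\sDist(X,Y)$ is a polytope, so $\Vsupp(p)$ is a finite set. Thus $q\in \conv(\Vsupp(p))$ lets me write
\[
q \;=\; \sum_{i=1}^m \alpha_i\, q_i, \qquad \alpha_i\geq 0,\ \sum_i\alpha_i = 1,\ q_i\in\Vsupp(p).
\]

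Next, to verify $q\preceq p$, I would fix any $n\geq 0$, $x\in X_n$, $y\in Y_n$ with $q_x^y\neq 0$ and deduce $p_x^y\neq 0$. Evaluating the convex sum gives $q_x^y = \sum_i \alpha_i (q_i)_x^y$; since every summand is nonnegative and the total is strictly positive, there must exist some index $i$ for which $\alpha_i>0$ and $(q_i)_x^y>0$. Because $q_i\in\Vsupp(p)$, the relation $q_i\preceq p$ then forces $p_x^y\neq 0$, which is the defining condition for $q\preceq p$.

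I do not expect any genuine obstacle: the sole point is the observation that nonnegative convex combinations cannot cancel a strictly positive entry to zero, which is precisely what lets the preorder propagate from vertices to their convex hull. Should the finiteness of $\Vsupp(p)$ feel unnecessary to invoke, the argument works verbatim for any finite convex decomposition, since every point of $\conv(\Vsupp(p))$ admits such a representation by Carath\'eodory.
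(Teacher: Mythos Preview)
Your argument is correct and matches the paper's proof essentially line for line: write $q$ as a finite convex combination of elements of $\Vsupp(p)$, observe that $q_x^y\neq 0$ forces some $(q_i)_x^y\neq 0$, and conclude $p_x^y\neq 0$ from $q_i\preceq p$. The only difference is that you add a few extra remarks (finiteness of $\Vsupp(p)$, Carath\'eodory) that are harmless but unnecessary for the argument.
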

\begin{proof}
Assume that $q\in \conv(\Vsupp(p))$. This means that 
there exists $\alpha_1,\cdots,\alpha_n$ and $q_1,\cdots,q_n \in \Vsupp(p)$, where $\sum_{i=1}^n\alpha_i=1$, such that
$
q=\alpha_1 q_1 +\cdots+\alpha_n q_n
$. So if $q_x^{y} \neq 0$ for some $x \in X_n$ and $y\in Y_n$, then there exists $1 \leq j \leq n$ such that $(q_j)_x^y\neq 0$. This implies that $p_x^y\neq 0$.
\end{proof}

{Our decomposition result relies on this basic observation.}

\begin{pro}\label{pro:pull-back conv Vsupp}
Let $X=A\cup B$ and $p:X\to D(Y)$ be a simplicial distribution. There is a bijective correspondence between
\begin{enumerate}
\item the set of simplicial distributions $q$ on $X$ satisfying $q\preceq p$, and
\item the set of pairs $(q_A,q_B)$ of simplicial distributions  $q_A\in \conv(\Vsupp(p|_A))$ and ${q_B\in} \conv(\Vsupp(p|_B))$ satisfying $q_A|_{A\cap B}=q_B|_{A\cap B}$.
\end{enumerate}
\end{pro}
\begin{proof}
%
Let
$q: X \to D(Y)$ be  such that $q\preceq p$. 
Then $q|_{A}\preceq p|_{A}$, and because the relation $\preceq$ is transitive we have that  $\Vsupp(q|_{A}) \subseteq \Vsupp(p|_{A})$. This implies that $q|_{A} \in \conv(\Vsupp(p|_{A}))$. 
Similarly, we have that $q|_{B} \in \conv(\Vsupp(p|_{B}))$. 

Conversely, let $q_A$ and $q_B$ {be} such that 
{$q_A\in \conv(\Vsupp(p|_A))$, $q_B\in \conv(\Vsupp(p|_B))$, and} 
$q_A|_{A \cap B}=q_B|_{A \cap B}$. We can construct $q$ on $X$ by gluing $q_A$ and $q_B$. By Lemma \ref{lem:qinConv} we have {$q|_{A}\preceq p|_{A}$ and $q|_{B}\preceq p|_{B}$, which implies that} $q\preceq p$.
\end{proof}

{For us this result will be most useful in vertex detection.}

\begin{thm}\label{thm:vertex on union}
Let $X=A\cup B$.
A simplicial distribution $p:X\to D(Y)$ is a vertex if and only if $p$ is the unique {simplicial} distribution whose restrictions to $A$ and $B$ fall inside $\conv(\Vsupp(p|_A))$ and $\conv(\Vsupp(p|_B))$, respectively. 
\end{thm}
\begin{proof}
Follows immediately from Corollary \ref{cor:preorder vertex characterization}  and Proposition \ref{pro:pull-back conv Vsupp}.
\end{proof}

{Next, we apply this technique to identify the vertices of some common measurement scenarios, such as the cycle and Bell scenarios.}

\subsection{Popescu-Rohrlich boxes}
{\rm 
{We begin by applying our techniques to a well-known case:}
{T}he contextual vertices on the scenario $(C^{(4)},\Delta_{\ZZ_{2}})$ consist of the so-called Popescu-Rohrlich (PR) boxes \cite{pr94}. 
Our techniques 
 can be used to prove that PR boxes are vertices. An example of a PR box is 
{given by the}
distribution
\begin{equation}\label{eq:PR box} 
p|_{\sigma_{1}} =%
\begin{pmatrix}
0 & \frac{1}{2}\\
\frac{1}{2} & 0
\end{pmatrix}\quad\text{and}\quad%
p|_{\sigma_{i}} =%
\begin{pmatrix}
\frac{1}{2} & 0\\
0 & \frac{1}{2}
\end{pmatrix}
\end{equation}
{where}
$i=2,3,4$. 
Let us consider the decomposition $C^{(4)} = A\cup B$, where 
\begin{itemize}
\item $A$ {consists of a single edge ($1$-simplex)} $\sigma_{1}$, and 
\item $B$ {consists of} {the union} $\sigma_{2}\cup \sigma_{3} \cup \sigma_{4}$ {of three edges}{; see Figure \ref{fig:PR}.}
\end{itemize} 

\begin{figure}[h!] 
  \centering
  \includegraphics[width=.2\linewidth]{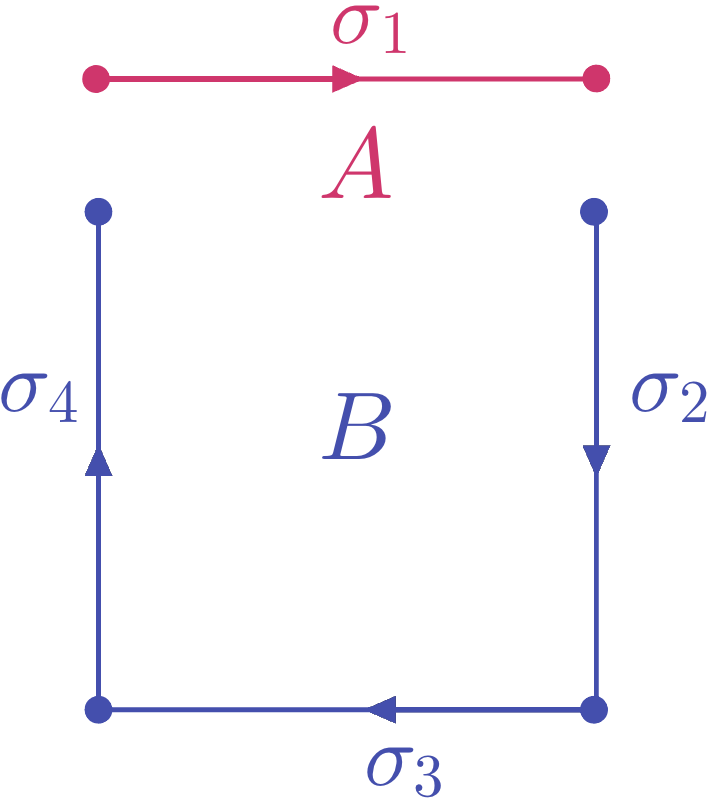}
\caption{{Decomposition of the $4$-circle.}}
\label{fig:PR}
\end{figure}

Since $A$ is a {$1$-}simplex and $B$ is a directed path obtained by gluing {along $0$-}simplices it is known (see, e.g., \cite[{Example 3.11}, Corollary 4.6]{okay2022simplicial}) that any distribution on $(A,\Delta_{\ZZ_{d}})$ and $(B,\Delta_{\ZZ_{d}})$ is noncontextual, thus the vertices are all deterministic. 
It follows that $\Vsupp(p|_{K}) {\cong} \supp(p|_{K})$ for 
$K=A,B$.
{So one can see that} 
\begin{eqnarray}
\Vsupp(p|_{A}) &=& \left \{\delta^{\varphi_{1}^{A}},\delta^{\varphi_{2}^{A}}\right \}\quad \text{where}\quad%
\delta^{\varphi_{1}^{A}} = \begin{pmatrix}0 & 1 \\ 0 & 0\end{pmatrix},~%
\delta^{\varphi_{2}^{A}} = \begin{pmatrix}0 & 0 \\ 1 & 0\end{pmatrix}
\end{eqnarray}
and
\begin{eqnarray}
\Vsupp(p|_{B}) &=& \left \{\delta^{\varphi_{1}^{B}},\delta^{\varphi_{2}^{B}}\right \}\quad \text{where}\quad%
\delta^{\varphi_{1}^{B}}|_{\sigma{i}} = \begin{pmatrix}1 & 0 \\ 0 & 0\end{pmatrix},~%
\delta^{\varphi_{2}^{B}}|_{\sigma{i}} = \begin{pmatrix}0 & 0 \\ 0 & 1\end{pmatrix},
\end{eqnarray}
{where}
$i=2,3,4$.

Let $q_{A}$ and $q_{B}$ be  distributions in $\conv\left( \Vsupp(p|_{A}) \right )$ and $\conv\left ( \Vsupp(p|_{B}) \right )$, respectively. Since there are 
two vertices in the vertex support, {these distributions} 
can be written as
\begin{eqnarray}
q_{A} = \begin{pmatrix} 0 & \alpha \\ 1-\alpha & 0 \end{pmatrix}%
\quad\quad\text{and}\quad \quad%
q_{B}|_{\sigma_{i}} = \begin{pmatrix} \beta & 0 \\ 0 & 1-\beta \end{pmatrix}%
\end{eqnarray}
for 
$i=2,3,4$ and 
$\alpha,\beta \in [0,1]$. Requiring compatibility{, i.e.,}
$$
{q_A|_{A\cap B} = q_B|_{A\cap B},}
$$
gives us that $\alpha=\beta = 1-\alpha = 1/2$. Since this yields a unique distribution {(which is the PR box in Equation (\ref{eq:PR box}))}, by {Theorem}~\ref{thm:vertex on union} we have that the PR box is a vertex of $\sDist(C^{(4)},\Delta_{\ZZ_{2}})$.

}

\subsection{A scenario with trichotomic measurements}

{\rm

We consider a simplicial distribution $p:X\to D(\Delta_{\ZZ_{3}})$ where $X$ is a $1$-dimensional space depicted in Figure \ref{fig:Example2}.
The generating simplices $\sigma_{i}$ $(i=1,\cdots,4)$ have distributions $p|_{\sigma_{i}} = Q_{i}$ where
\begin{eqnarray}
Q_{1} =%
\begin{pmatrix}%
0 & \frac{1}{3} & 0 \\
\frac{1}{3} & 0 & 0 \\
0 & 0 & \frac{1}{3}
\end{pmatrix},~~%
Q_{2} =%
\begin{pmatrix}%
\frac{1}{3} & 0 & 0 \\
0 & \frac{1}{3} & 0 \\
0 & 0 & \frac{1}{3}
\end{pmatrix},~~%
Q_{3} =%
\begin{pmatrix}%
\frac{1}{3} & 0 & 0 \\
0 & \frac{1}{3} & 0 \\
0 & 0 & \frac{1}{3}
\end{pmatrix},~~%
Q_{4} =%
\begin{pmatrix}%
0 & 0 & \frac{1}{3} \\
0 & \frac{1}{3} & 0 \\
\frac{1}{3} & 0 & 0
\end{pmatrix}.
\end{eqnarray}
We decompose our space into 
$X = A\cup B$ where 
{$A$ consists of the union  $\sigma_1 \cup \sigma_2$ and $B$ consists of 
$\sigma_3 \cup \sigma_4$.}
{Using Theorem \ref{thm:vert1} we find} the vertex supports
\begin{eqnarray}
\Vsupp(p|_{A}) = \left \{%
\begin{matrix}%
\begin{pmatrix}0 & 0 & 0\\ 0 & 0 & 0\\ 0 & 0 & 1\end{pmatrix}_{\sigma_{1}}\\
\begin{pmatrix}0 & 0 & 0\\ 0 & 0 & 0\\ 0 & 0 & 1\end{pmatrix}_{\sigma_{2}}
\end{matrix},~%
\begin{matrix}%
\begin{pmatrix}0 & \frac{1}{2} & 0\\ \frac{1}{2} & 0 & 0\\ 0 & 0 & 0\end{pmatrix}_{\sigma_{1}}\\
\begin{pmatrix}\frac{1}{2} & 0 & 0\\ 0 & \frac{1}{2} & 0\\ 0 & 0 & 0\end{pmatrix}_{\sigma_{2}}
\end{matrix}%
\right \},~
\Vsupp(p|_{B}) = \left \{%
\begin{matrix}%
\begin{pmatrix}0 & 0 & 0\\ 0 & 1 & 0\\ 0 & 0 & 0\end{pmatrix}_{\sigma_{3}}\\
\begin{pmatrix}0 & 0 & 0\\ 0 & 1 & 0\\ 0 & 0 & 0\end{pmatrix}_{\sigma_{4}}
\end{matrix},~%
\begin{matrix}%
\begin{pmatrix}\frac{1}{2} & 0 & 0\\ 0 & 0 & 0\\ 0 & 0 & \frac{1}{2}\end{pmatrix}_{\sigma_{3}}\\
\begin{pmatrix}0 & 0 & \frac{1}{2}\\ 0 & 0 & 0\\ \frac{1}{2} & 0 & 0\end{pmatrix}_{\sigma_{4}}
\end{matrix}%
\right \}.
\end{eqnarray}
{Then arbitrary} distributions $q_{A}$ and $q_{B}$ in $\conv\left (\Vsupp(p|_{A})\right )$ and $\conv\left (\Vsupp(p|_{B})\right )$, respectively, are given by
\begin{eqnarray}
q_{A} =%
\begin{matrix}%
\begin{pmatrix}0 & \frac{\alpha_{2}}{2} & 0\\ \frac{\alpha_{2}}{2} & 0 & 0\\ 0 & 0 & \alpha_{1}\end{pmatrix}_{\sigma_{1}}\\
\begin{pmatrix}\frac{\alpha_{2}}{2} & 0 & 0\\ 0 & \frac{\alpha_{2}}{2} & 0\\ 0 & 0 & \alpha_{1}\end{pmatrix}_{\sigma_{2}}
\end{matrix}%
\quad\quad\text{and}\quad\quad
q_{B} =%
\begin{matrix}%
\begin{pmatrix}\frac{\beta_{2}}{2} & 0 & 0\\ 0 & \beta_{1} & 0\\ 0 & 0 & \frac{\beta_{2}}{2}\end{pmatrix}_{\sigma_{3}}\\
\begin{pmatrix}0 & 0 & \frac{\beta_{2}}{2}\\ 0 & \beta_{1} & 0\\ \frac{\beta_{2}}{2} & 0 & 0\end{pmatrix}_{\sigma_{4}}
\end{matrix}.%
\end{eqnarray}
The distributions agree on their intersection if and only if
\begin{eqnarray}
\frac{1}{2}\alpha_{2} = \frac{1}{2}\beta_{2}\quad\quad\text{and}\quad\quad \frac{1}{2}\alpha_{2} = \beta_{1}.
\end{eqnarray}
This implies that $\beta_{1} =\beta_{2}/2 $. Using normalization $\beta_{1}+\beta_{2} = 1$ we obtain $\beta_{1} = 1/3$ and $\beta_{2} = 2/3$, which additionally yields $\alpha_{1} = 1/3$ and $\alpha_{2} = 2/3$. This is a unique solution that recovers the distribution $p$ given above. Thus $p$ is a vertex {by {Theorem}~\ref{thm:vertex on union}}.

}

\begin{figure}[h!] 
  \centering
  \includegraphics[width=.6\linewidth]{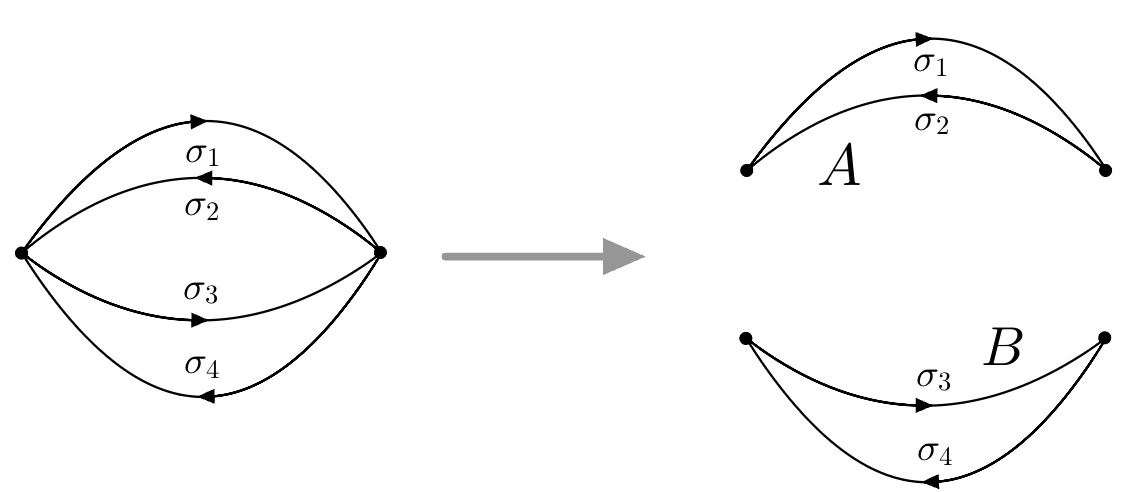}
\caption{The measurement space decomposed into two parts given by {$2$-circles.}
}
\label{fig:Example2}
\end{figure}

\subsection{$(2,3,3)$ Bell scenario}
\label{sec:232 Bell}

{\rm
{
{Next} we prove that the distribution on the $(2, 3, 3)$ Bell scenario, i.e., two parties three trichotomic measurements per party, described in Table III of \cite{jones2005interconversion} is a contextual vertex.}
Consider 
{the simplicial}
scenario $(X,\Delta_{\ZZ_{3}})$ where $X$ is a $1$-dimensional space whose underlying graph is $K_{3,3}$; {see Figure \ref{fig:Example3}.}
This corresponds to the bipartite $(2,3,3)$ Bell scenario with three measurements per party and three outcomes per measurement. There are $9$ generating simplices of $X$ that we label $\sigma_{i}$ $(i=1,\cdots,9)$. Introduce now a collection $P,Q,R,S,T$ of distributions on $(\Delta^{1},\Delta_{\ZZ_{3}})$ given by
\begin{eqnarray}
P =%
\begin{pmatrix}%
\frac{1}{4} & 0 & \frac{1}{4} \\
0 & \frac{1}{4} & 0 \\
\frac{1}{4} & 0 & 0
\end{pmatrix},\,%
Q =%
\begin{pmatrix}%
\frac{1}{4} & \frac{1}{4} & 0 \\
\frac{1}{4} & 0 & 0 \\
0 & 0 & \frac{1}{4}
\end{pmatrix},\,%
R =%
\begin{pmatrix}%
\frac{1}{4} & 0 & \frac{1}{4} \\
\frac{1}{4} & 0 & 0 \\
0 & \frac{1}{4} & 0
\end{pmatrix},\,%
S =%
\begin{pmatrix}%
0 & \frac{1}{4} & \frac{1}{4} \\
\frac{1}{4} & 0 & 0 \\
\frac{1}{4} & 0 & 0
\end{pmatrix}, \,%
T =%
\begin{pmatrix}%
\frac{1}{2} & 0 & 0 \\
0 & \frac{1}{4} & 0 \\
0 & 0 & \frac{1}{4}
\end{pmatrix}.%
\end{eqnarray}

\begin{figure}[h!] 
  \centering
  \includegraphics[width=.4\linewidth]{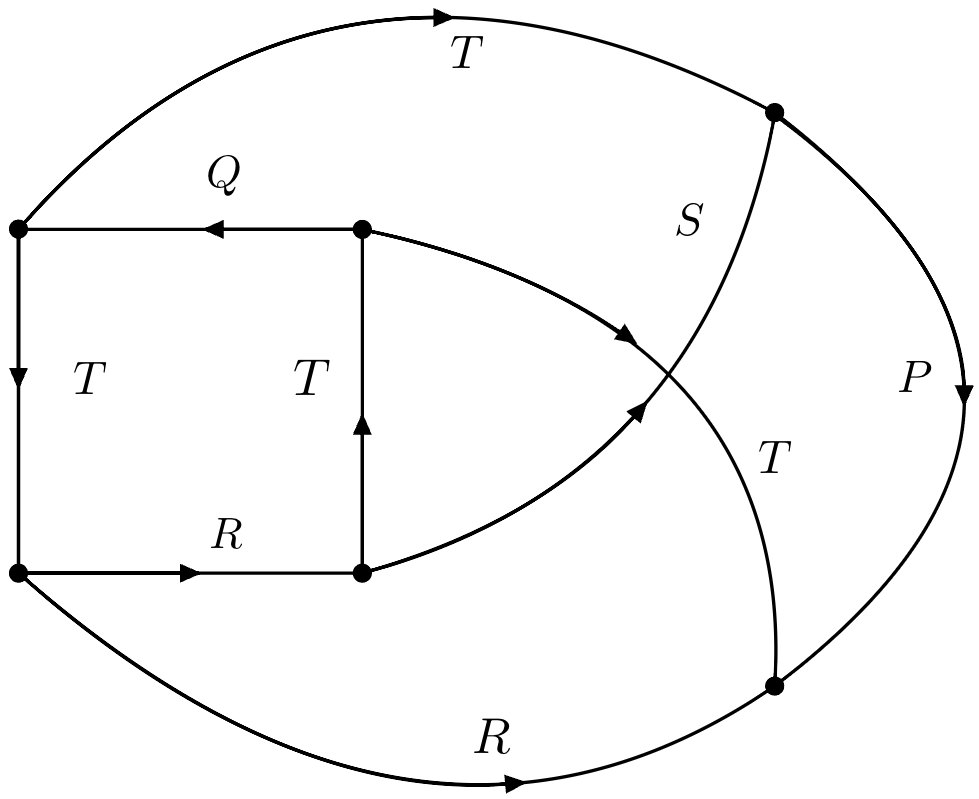}
\caption{
}
\label{fig:Example3}
\end{figure}

We consider a simplicial distribution $p:X\to\Delta_{\ZZ_{3}}$ such that $p_{\sigma_{1}} = P$, $p_{\sigma_{2}} = Q$,  $p_{\sigma_{3}} = p_{\sigma_{4}} = R$, $p_{\sigma_{5}} = S$, and $p_{\sigma_{i}} = T$ for $i=6,\cdots,9$ {(see Figure \ref{fig:Example3})}. Notice that the diagonal of $T$ sums to one, thus we can interpret $T$ as a collapsed distribution   in the sense of \cite[Section $5$]{e25081127}. 
{This means that the distribution $T$ on the edge actually comes from a point via a collapsing map. To see this let} 
$q:\Delta^{0}\to D(\Delta_{\ZZ_{3}})$ be the distribution on a point where $q = (1/2,1/4,1/4)$. {Then} we have the diagram
$$
\begin{tikzcd}[column sep=large]
    \Delta^{1} \arrow[r,"s^0"] \arrow[rr, bend left, "T"] & \Delta^{0} \arrow[r, "q"] & D(\Delta_{\ZZ_{3}})
\end{tikzcd}
$$
{where $s^0$ is the collapsing map which maps an edge to the point.}
Thus we can collapse the edges in $X$ with distributions $p_{\sigma_{i}} = T$, yielding the quotient
space $\bar X$ {depicted in Figure \ref{fig:Example3-collapse}.} 
More explicitly, we have the collapse map $\pi:X\to \bar X$ which induces a map $\pi^{\ast}:\sDist(\bar X,\Delta_{\ZZ_{3}})\to \sDist(X,\Delta_{\ZZ_{3}})$. 
{In Figure \ref{fig:Example3-collapse} we describe the distribution $\bar p \in \sDist(\bar X,\Delta_{\ZZ_{3}})$  satisfying $\pi^\ast(\bar p)=p$}. By {part 3 of} Theorem~4.4 {in} \cite{kharoof2023homotopical} {we have} that $p$ is a vertex of $\sDist(X,\Delta_{\ZZ_{3}})$ if and only if $\bar p$ is a vertex of $\sDist(\bar X,\Delta_{\ZZ_{3}})$. Moreover, it is clear that $\bar p$ is a vertex if and only if $q:Z\to \Delta_{\ZZ_{3}}$ is a vertex, where $Z$ {and the distribution $q$ {are} given {as} {in} Figure \ref{fig:Example3-collapse-b}.} 

\begin{figure}[h!]
\centering
\begin{subfigure}{.33\textwidth}
  \centering
  \includegraphics[width=.6\linewidth]{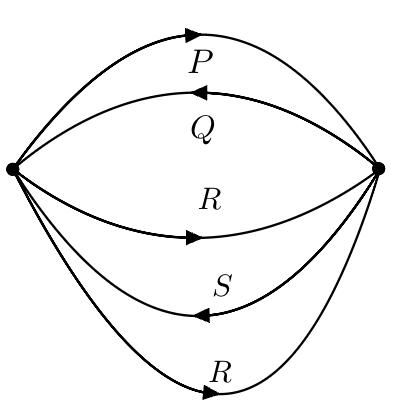}
  \caption{}
  \label{fig:Example3-collapse}
\end{subfigure}%
\begin{subfigure}{.33\textwidth}
  \centering
  \includegraphics[width=.6\linewidth]{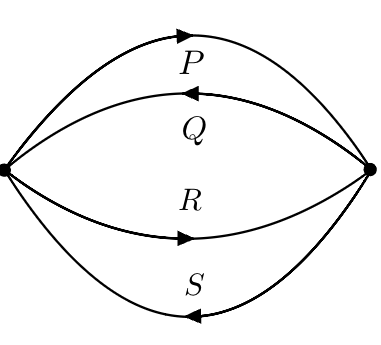}
  \caption{}
  \label{fig:Example3-collapse-b}
\end{subfigure}
\begin{subfigure}{.33\textwidth}
  \centering
  \includegraphics[width=.6\linewidth]{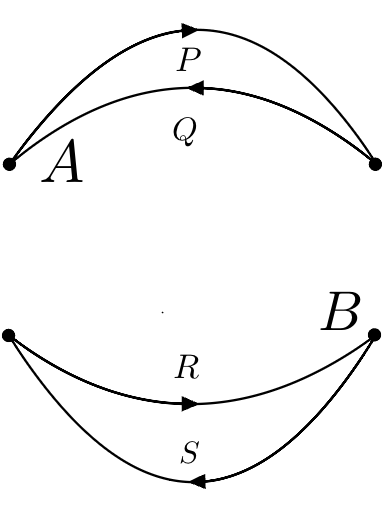}
  \caption{}
  \label{fig:Example3-decomp}
\end{subfigure}
\caption{{(a) Space obtained from $K_{3,3}$ by collapsing the edges with distribution $T$ to a point. (b) {One of the two copies of the edge with distribution $R$ is removed.} 
(c) {{Measurement} space decomposed into two } {parts}.} 
}
\label{fig:mermin-scenario-and-os}
\end{figure}

We are now in a position to apply {Theorem}~\ref{thm:vertex on union}. Consider the decomposition $Z = A \cup B$, where each piece is a $2$-circle $C^{(2)}$ {as in Figure \ref{fig:Example3-decomp}.}
The corresponding vertex supports are given by
\begin{eqnarray}
\Vsupp(q|_{A}) &=& \left \{%
\begin{matrix}%
\begin{pmatrix}1 & 0 & 0\\ 0 & 0 & 0\\ 0 & 0 & 0\end{pmatrix}_{\sigma_{1}}\\
\begin{pmatrix}1 & 0 & 0\\ 0 & 0 & 0\\ 0 & 0 & 0\end{pmatrix}_{\sigma_{2}}
\end{matrix},~%
\begin{matrix}%
\begin{pmatrix}\frac{1}{2} & 0 & 0\\ 0 & \frac{1}{2} & 0\\ 0 & 0 & 0\end{pmatrix}_{\sigma_{1}}\\
\begin{pmatrix}0 & \frac{1}{2} & 0\\ \frac{1}{2} & 0 & 0\\ 0 & 0 & 0\end{pmatrix}_{\sigma_{2}}
\end{matrix},~%
\begin{matrix}%
\begin{pmatrix}0 & 0 & \frac{1}{3}\\ 0 & \frac{1}{3} & 0\\ \frac{1}{3} & 0 & 0\end{pmatrix}_{\sigma_{1}}\\
\begin{pmatrix}0 & \frac{1}{3} & 0\\ \frac{1}{3} & 0 & 0\\ 0 & 0 & \frac{1}{3}\end{pmatrix}_{\sigma_{2}}
\end{matrix},~%
\begin{matrix}%
\begin{pmatrix}0 & 0 & \frac{1}{2}\\ 0 & 0 & 0\\ \frac{1}{2} & 0 & 0\end{pmatrix}_{\sigma_{1}}\\
\begin{pmatrix}\frac{1}{2} & 0 & 0\\ 0 & 0 & 0\\ 0 & 0 & \frac{1}{2}\end{pmatrix}_{\sigma_{2}}
\end{matrix}
\right \}\\
~\notag\\
\Vsupp(q|_{B}) &=& \left \{%
\begin{matrix}%
\begin{pmatrix}0 & 0 & 0\\ 1 & 0 & 0\\ 0 & 0 & 0\end{pmatrix}_{\sigma_{4}}\\
\begin{pmatrix}0 & 1 & 0\\ 0 & 0 & 0\\ 0 & 0 & 0\end{pmatrix}_{\sigma_{5}}
\end{matrix}~,%
\begin{matrix}%
\begin{pmatrix}0 & 0 & 1\\ 0 & 0 & 0\\ 0 & 0 & 0\end{pmatrix}_{\sigma_{4}}\\
\begin{pmatrix}0 & 0 & 0\\ 0 & 0 & 0\\ 1 & 0 & 0\end{pmatrix}_{\sigma_{5}}
\end{matrix}~,%
\begin{matrix}%
\begin{pmatrix}\frac{1}{2} & 0 & 0\\ 0 & 0 & 0\\ 0 & \frac{1}{2} & 0\end{pmatrix}_{\sigma_{4}}\\
\begin{pmatrix}0 & 0 & \frac{1}{2}\\ \frac{1}{2} & 0 & 0\\ 0 & 0 & 0\end{pmatrix}_{\sigma_{5}}
\end{matrix}~
\right \}.
\end{eqnarray}
General distributions $\rho_{A}$ and $\rho_{B}$ on the faces defined by $\Vsupp(q|_{A})$ and $\Vsupp(q|_{B})$, respectively, are given by
\begin{eqnarray}
\rho_{A} =%
\begin{matrix}%
\begin{pmatrix}\alpha_{1}+\frac{\alpha_{2}}{2} & 0 & \frac{\alpha_{3}}{3}+\frac{\alpha_{4}}{2}\\
0 & \frac{\alpha_{2}}{2}+\frac{\alpha_{3}}{3} & 0\\
\frac{\alpha_{3}}{3}+\frac{\alpha_{4}}{2} & 0 & 0%
\end{pmatrix}_{\sigma_{1}}\\
\begin{pmatrix}%
\alpha_{1}+\frac{\alpha_{4}}{2} & \frac{\alpha_{2}}{2}+\frac{\alpha_{3}}{3} & 0\\
\frac{\alpha_{2}}{2}+\frac{\alpha_{3}}{3} & 0 & 0\\
0 & 0 & \frac{\alpha_{3}}{3}+\frac{\alpha_{4}}{2}%
\end{pmatrix}_{\sigma_{2}}
\end{matrix}%
\quad\text{and}\quad
\rho_{B}=%
\begin{matrix}%
\begin{pmatrix}%
\frac{\beta_{3}}{2} & 0 & \beta_{2}\\
\beta_{1} & 0 & 0\\
0 & \frac{\beta_{3}}{2} & 0%
\end{pmatrix}_{\sigma_{4}}\\
\begin{pmatrix}%
0 & \beta_{1} & \frac{\beta_{3}}{2}\\
\frac{\beta_{3}}{2} & 0 & 0\\
\beta_{2} & 0 & 0%
\end{pmatrix}_{\sigma_{5}}
\end{matrix},%
\end{eqnarray}
where $\alpha_{i},\beta_{j}\in [0,1]$ and $\sum_{i=1}^{4}\alpha_{i} = \sum_{j=1}^{3}\beta_{j} = 1$. The two distributions agree on their intersection if and only if the following system of equations holds
\begin{eqnarray}\label{eq:systemEq}
\beta_{1} = \frac{1}{2}\alpha_{2}+\frac{1}{3}\alpha_{3},\quad\quad \beta_{3} = \alpha_{2}+\frac{2}{3}\alpha_{3},\quad\quad\beta_{3} = \frac{2}{3}\alpha_{3}+\alpha_{4},\quad\quad \beta_{2} = \frac{1}{3}\alpha_{3}+\frac{1}{2}\alpha_{4}.
\end{eqnarray}
Setting the middle two equations equal implies that $\alpha_{2} = \alpha_{4}$ that in turn implies that $\beta_{1} = \beta_{2} = \beta_{3}/2$, from which normalization gives {$\beta_{1} =\beta_{2}= 1/4$ and $\beta_{3}= 1/2$}. 
{We substitute {these} into Equation (\ref{eq:systemEq}) to {obtain} that 
$$
\frac{\alpha_{2}}{2}+\frac{\alpha_{3}}{3}=\frac{\alpha_{3}}{3}+\frac{\alpha_{4}}{2}=\frac{\alpha_{3}}{3}+\frac{\alpha_{4}}{2}=1/4 .
$$
%
Together with the normalization we obtain that $\rho_{A}|_{\sigma_1}=P$ and $\rho_{A}|_{\sigma_2}=Q$.}
The uniqueness of the solution implies that $q$ is a vertex by {Theorem}~\ref{thm:vertex on union}, which then implies that $p$ is a vertex.
}

\section{Conclusion}

{Our main
contribution is to characterize all contextual extremal simplicial distributions of cycle scenarios with arbitrary
outcomes by solving} the vertex enumeration problem for the polytope of simplicial distributions on cycle scenarios with arbitrary outcomes. 
The simplicial distribution formulation aligns with the usual non-signaling formulation. 
{Our result extends the known classification for binary outcomes to the case of arbitrary outcomes.}
Our approach uniquely benefits from novel homotopical methods, extending to scenarios formed by gluing cycle scenarios together. This gluing process is a crucial aspect of the theory of simplicial distributions when studying contextual distributions.

Decomposing a measurement space into smaller components is a notable feature of our work that warrants further investigation. As illustrated in Section \ref{sec:scenarios obtained by gluing}, if the vertex enumeration problem (see, e.g., \cite{avis1991pivoting}) is solved for measurement spaces $A$ and $B$ in a given outcome space, this solution can be leveraged to detect vertices in the composite space $X=A\cup B$. Consequently, our results suggest a novel algorithm for enumerating the vertices of polytopes derived from simplicial distributions. Systematizing this procedure to identify additional classes of contextual vertices in other scenarios is a direction for future research. 

{Moreover, understanding contextual extremal distributions has applications to quantum advantage in the scheme of measurement-based quantum computation \cite{anders2009computational,raussendorf2013contextuality, hoban2011generalized}. Our classification can be utilized to prove separation results in computational complexity, e.g., generalizing and extending results based on elevating a parity computer to a classically universal one. {In addition, recent work has demonstrated that the vertices of polytopes of simplicial distributions arising from Bell scenarios play a central role in the classical simulation of quantum computation within the magic-state model \cite{okay2024classical}.} Simplicial distributions provide the essential tools to investigate this kind of problem by providing otherwise impossible classification results, as we demonstrate in this paper. Our work opens the way for new results and brings a novel topological approach.}

\bibliography{bib.bib}

\begin{thebibliography}{10}

\bibitem{okay2022simplicial}
C.~Okay, A.~Kharoof, and S.~Ipek, ``Simplicial quantum contextuality,'' {\em
  Quantum}, vol.~7, 2023.
\newblock doi:
  \href{https://doi.org/10.22331/q-2023-05-22-1009}{10.22331/q-2023-05-22-1009}.

\bibitem{abramsky2011sheaf}
S.~Abramsky and A.~Brandenburger, ``The sheaf-theoretic structure of
  non-locality and contextuality,'' {\em New Journal of Physics}, vol.~13,
  no.~11, p.~113036, 2011.
\newblock doi:
  \href{https://doi.org/10.1088/1367-2630/13/11/113036}{10.1088/1367-2630/13/11/113036}.

\bibitem{Coho}
C.~Okay, S.~Roberts, S.~D. Bartlett, and R.~Raussendorf, ``Topological proofs
  of contextuality in quantum mechanics,'' {\em Quantum Information \&
  Computation}, vol.~17, no.~13-14, pp.~1135--1166, 2017.
\newblock doi:
  \href{https://doi.org/10.26421/QIC17.13-14-5}{10.26421/QIC17.13-14-5}. arXiv:
  \href{https://arxiv.org/abs/1701.01888}{1701.01888}.

\bibitem{e25081127}
A.~Kharoof, S.~Ipek, and C.~Okay, ``Topological methods for studying
  contextuality: N-cycle scenarios and beyond,'' {\em Entropy}, vol.~25, no.~8,
  2023.
\newblock doi: \href{https://doi.org/10.3390/e25081127}{10.3390/e25081127}.

\bibitem{chsh69}
J.~F. Clauser, M.~A. Horne, A.~Shimony, and R.~A. Holt, ``Proposed experiment
  to test local hidden-variable theories,'' {\em Physical review letters},
  vol.~23, no.~15, p.~880, 1969.
\newblock doi:
  \href{https://link.aps.org/doi/10.1103/PhysRevLett.23.880}{10.1103/PhysRevLett.23.880}.

\bibitem{khachiyan2008generating}
L.~Khachiyan, E.~Boros, K.~Borys, V.~Gurvich, and K.~Elbassioni, ``Generating
  all vertices of a polyhedron is hard,'' in {\em Twentieth Anniversary Volume:
  Discrete \& Computational Geometry}, pp.~1--17, Springer, 2008.

\bibitem{kharoof2023homotopical}
A.~Kharoof and C.~Okay, ``Homotopical characterization of strongly contextual
  simplicial distributions on cone spaces,'' {\em arXiv preprint
  arXiv:2311.14111}, 2023.

\bibitem{barbosa2023bundle}
R.~S. Barbosa, A.~Kharoof, and C.~Okay, ``A bundle perspective on
  contextuality: Empirical models and simplicial distributions on bundle
  scenarios,'' {\em arXiv preprint arXiv:2308.06336}, 2023.

\bibitem{barrett2005nonlocal}
J.~Barrett, N.~Linden, S.~Massar, S.~Pironio, S.~Popescu, and D.~Roberts,
  ``Nonlocal correlations as an information-theoretic resource,'' {\em Physical
  Review A}, vol.~71, p.~022101, Feb 2005.
\newblock doi:
  \href{https://link.aps.org/doi/10.1103/PhysRevA.71.022101}{10.1103/PhysRevA.71.022101}.
  arXiv: \href{https://arxiv.org/abs/quant-ph/0404097}{quant-ph/0404097}.

\bibitem{mermin1993hidden}
N.~D. Mermin, ``Hidden variables and the two theorems of {J}ohn {B}ell,'' {\em
  Reviews of Modern Physics}, vol.~65, no.~3, p.~803, 1993.
\newblock doi:
  \href{https://link.aps.org/doi/10.1103/RevModPhys.65.803}{10.1103/RevModPhys.65.803}.

\bibitem{anders2009computational}
J.~Anders and D.~E. Browne, ``Computational power of correlations,'' {\em
  Physical Review Letters}, vol.~102, no.~5, p.~050502, 2009.

\bibitem{raussendorf2013contextuality}
R.~Raussendorf, ``Contextuality in measurement-based quantum computation,''
  {\em Physical Review A—Atomic, Molecular, and Optical Physics}, vol.~88,
  no.~2, p.~022322, 2013.

\bibitem{frembs2018contextuality}
M.~Frembs, S.~Roberts, and S.~D. Bartlett, ``Contextuality as a resource for
  measurement-based quantum computation beyond qubits,'' {\em New Journal of
  Physics}, vol.~20, no.~10, p.~103011, 2018.

\bibitem{okay2024classical}
C.~Okay, A.~T. Yucel, and S.~Ipek, ``Classical simulation of universal
  measurement-based quantum computation using multipartite bell scenarios,''
  {\em arXiv preprint arXiv:2410.23734}, 2024.

\bibitem{gross2006hudson}
D.~Gross, ``Hudson’s theorem for finite-dimensional quantum systems,'' {\em
  Journal of mathematical physics}, vol.~47, no.~12, 2006.

\bibitem{friedman2008elementary}
G.~Friedman, ``An elementary illustrated introduction to simplicial sets,''
  {\em arXiv preprint arXiv:0809.4221}, 2008.
\newblock doi:
  \href{https://doi.org/10.48550/arXiv.0809.4221}{10.48550/arXiv.0809.4221}.

\bibitem{kharoof2022simplicial}
A.~Kharoof and C.~Okay, ``Simplicial distributions, convex categories and
  contextuality,'' {\em arXiv preprint arXiv:2211.00571}, 2022.

\bibitem{goerss2009simplicial}
P.~G. Goerss and J.~F. Jardine, {\em Simplicial homotopy theory}.
\newblock Springer Science \& Business Media, 2009.

\bibitem{stevenson2011d}
D.~Stevenson, ``D\'{e}calage and {K}an's simplicial loop group functor,'' {\em
  Theory Appl. Categ.}, vol.~26, pp.~No. 28, 768--787, 2012.

\bibitem{okay2024twisted}
C.~Okay and W.~H. Stern, ``Twisted simplicial distributions,'' {\em arXiv
  preprint arXiv:2403.19808}, 2024.

\bibitem{pr94}
S.~Popescu and D.~Rohrlich, ``Quantum nonlocality as an axiom,'' {\em
  Foundations of Physics}, vol.~24, no.~3, pp.~379--385, 1994.
\newblock doi: \href{https://doi.org/10.1007/BF02058098}{10.1007/BF02058098}.

\bibitem{jones2005interconversion}
N.~S. Jones and L.~Masanes, ``Interconversion of nonlocal correlations,'' {\em
  Physical Review A}, vol.~72, no.~5, p.~052312, 2005.

\bibitem{avis1991pivoting}
D.~Avis and K.~Fukuda, ``A pivoting algorithm for convex hulls and vertex
  enumeration of arrangements and polyhedra,'' in {\em Proceedings of the
  seventh annual symposium on Computational geometry}, pp.~98--104, 1991.

\bibitem{hoban2011generalized}
M.~J. Hoban, J.~J. Wallman, and D.~E. Browne, ``Generalized bell-inequality
  experiments and computation,'' {\em Physical Review A—Atomic, Molecular,
  and Optical Physics}, vol.~84, no.~6, p.~062107, 2011.

\end{thebibliography}
\bibliographystyle{ieeetr}

\end{document}